\newtheoremstyle{slanted}
{3pt}
{3pt}
{\slshape}
{}
{\bfseries}
{.}
{.5em}
{}
\theoremstyle{slanted}
\newtheorem{thm}{Theorem}[section]
\newtheorem{lem}[thm]{Lemma}
\newtheorem{prop}[thm]{Proposition}
\newtheorem{defn}[thm]{Definition}
\theoremstyle{remark}
\newtheorem{rem}[thm]{Remark}
\newtheorem{ex}[thm]{Example}
\def \github {\url{github.com/williamkuszmaul/patternavoidance}}
\def \killpos {\text{\textbf{killpos}}}
\def \insertpos {\text{\textbf{insertpos}}}
\def \setpos {\text{\textbf{setpos}}}
\def \popcount {\text{\textbf{popcount}}}
\def \ctz {\text{\textbf{ctz}}}
\def \E {\mathbb{E}}
\def \pidd {$\Pi\text{DD}$}
\def \st {\text{st}}
\newcommand{\match}[3]{
\item {#1\newline Appears #2 times.\\ Example match: #3 \newline}
}
\newcommand{\fig}[3]{
  \begin{center}
  \begin{figure}
    #1
    \caption{#3}
    \label{#2}
  \end{figure}
  \end{center}
}
\newcommand{\subfig}[2]{
    \begin{subfigure}[h]{.48\linewidth}
      \begin{center}
        #1
      \end{center}
      \caption{#2}
    \end{subfigure} 
}
\def \tablepercentonavoiders{
\begin{tabular}{l | l  l  l  l }
$n \backslash k$ & 3 & 4 & 5 & 6 \\ \hline
8 & 0.777 & 0.932 & 0.953 & 0.945 \\
9 & 0.787 & 0.946 & 0.969 & 0.970 \\
10 & 0.797 & 0.956 & 0.978 & 0.983 \\
11 & 0.806 & 0.963 & 0.984 & 0.989 \\
12 & 0.814 & 0.969 & 0.988 & 0.993 \\
13 & 0.822 & 0.973 & 0.991 & 0.995 \\
14 & 0.829 & 0.977 & 0.993 & 0.996 \\
15 & 0.836 & 0.980 & 0.994 & 0.997 \\
16 & 0.842 & 0.982 & 0.995 & 0.998 \\
\end{tabular}}
\def \tablebruteavsingle{
\begin{tabular}{l | l  l  l  l }
$n \backslash k$ & 3 & 4 & 5 & 6 \\ \hline
8 & 0.001 & 0.013 & 0.016 & 0.012 \\
9 & 0.004 & 0.063 & 0.094 & 0.072 \\
10 & 0.009 & 0.299 & 0.832 & 0.958 \\
11 & 0.037 & 2.377 & 9.530 & 13.518 \\
12 & 0.151 & 19.068 & 112.187 & 198.764 \\
13 & 0.615 & 153.8 & 1348.32 & 3032.45 \\
\end{tabular}}
\def \tablebruteavmultiple{
\begin{tabular}{l | l  l  l  l }
$n \backslash k$ & 3 & 4 & 5 & 6 \\ \hline
10 & 0.009 & 0.026 & 0.049 & 0.062 \\
11 & 0.037 & 0.124 & 0.258 & 0.362 \\
12 & 0.151 & 0.572 & 1.345 & 2.120 \\
13 & 0.623 & 2.607 & 6.838 & 11.945 \\
14 & 2.490 & 11.801 & 34.316 & 66.623 \\
15 & 10.155 & 53.014 & 169.297 & 359.042 \\
16 & 41.299 & 236.709 & 822.06 & 1906.53 \\
\end{tabular}}
\def \tablebrutecntsingle{
\begin{tabular}{l | l  l  l  l }
$n \backslash k$ & 3 & 4 & 5 & 6 \\ \hline
8 & 0.021 & 0.020 & 0.013 & 0.007 \\
9 & 0.258 & 0.265 & 0.187 & 0.120 \\
10 & 3.361 & 3.763 & 2.791 & 1.940 \\
11 & 46.973 & 57.216 & 44.352 & 32.621 \\
12 & 705.082 & 930.591 & 752.467 & 581.081 \\
\end{tabular}}
\def \tablebrutecntmultiple{

  \begin{tabular}{l | l  l  l }
$n \backslash k$ & 3 & 4 & 5 \\ \hline
8 & 0.021 & 0.046 & 0.054 \\
9 & 0.258 & 0.650 & 0.911 \\
10 & 3.363 & 9.818 & 16.149 \\
11 & 46.960 & 156.46 & 297.638 \\
12 & 704.189 & 2646.83 & 5746.63 \\
\end{tabular}}
\def \tablehybridavsingle{
\begin{tabular}{l | l  l  l  l }
$n \backslash k$ & 3 & 4 & 5 & 6 \\ \hline
8 & 0.000 & 0.006 & 0.010 & 0.008 \\
9 & 0.001 & 0.030 & 0.059 & 0.051 \\
10 & 0.003 & 0.100 & 0.337 & 0.452 \\
11 & 0.009 & 0.654 & 3.388 & 5.559 \\
12 & 0.035 & 4.573 & 35.202 & 72.320 \\
13 & 0.131 & 32.533 & 378.392 & 985.548 \\
\end{tabular}}
\def \tablehybridavmultiple{
\begin{tabular}{l | l  l  l  l }
$n \backslash k$ & 3 & 4 & 5 & 6 \\ \hline
10 & 0.002 & 0.008 & 0.021 & 0.034 \\
11 & 0.009 & 0.035 & 0.102 & 0.181 \\
12 & 0.035 & 0.145 & 0.480 & 0.968 \\
13 & 0.131 & 0.598 & 2.237 & 5.043 \\
14 & 0.489 & 2.476 & 10.306 & 25.922 \\
15 & 1.825 & 10.193 & 47.311 & 130.265 \\
16 & 6.841 & 42.052 & 212.918 & 643.981 \\
\end{tabular}}
\def \tablehybridcntsingle{
\begin{tabular}{l | l  l  l  l }
$n \backslash k$ & 3 & 4 & 5 & 6 \\ \hline
8 & 0.003 & 0.004 & 0.003 & 0.002 \\
9 & 0.027 & 0.041 & 0.037 & 0.032 \\
10 & 0.286 & 0.468 & 0.440 & 0.400 \\
11 & 3.273 & 5.749 & 5.714 & 5.320 \\
12 & 40.809 & 77.063 & 80.786 & 77.323 \\
\end{tabular}}
\def \tablehybridcntmultiple{
\begin{tabular}{l | l  l  l }
$n \backslash k$ & 3 & 4 & 5 \\ \hline
8 & 0.003 & 0.010 & 0.018 \\
9 & 0.027 & 0.112 & 0.247 \\
10 & 0.286 & 1.400 & 3.616 \\
11 & 3.274 & 18.765 & 55.552 \\
12 & 40.716 & 269.727 & 893.669 \\
\end{tabular}}
\def \tableusavsingle{
\begin{tabular}{l | l  l  l  l }
$n \backslash k$ & 3 & 4 & 5 & 6 \\ \hline
8 & 0.000 & 0.000 & 0.001 & 0.001 \\
9 & 0.000 & 0.004 & 0.009 & 0.010 \\
10 & 0.001 & 0.019 & 0.045 & 0.050 \\
11 & 0.003 & 0.062 & 0.217 & 0.351 \\
12 & 0.008 & 0.339 & 1.779 & 3.590 \\
13 & 0.029 & 2.183 & 16.293 & 39.665 \\
\end{tabular}}
\def \tableusavmultiple{	
\begin{tabular}{l | l  l  l  l }
$n \backslash k$ & 3 & 4 & 5 & 6 \\ \hline
10 & 0.000 & 0.000 & 0.001 & 0.001 \\
11 & 0.002 & 0.002 & 0.003 & 0.003 \\
12 & 0.008 & 0.008 & 0.011 & 0.013 \\
13 & 0.029 & 0.031 & 0.039 & 0.046 \\
14 & 0.103 & 0.110 & 0.140 & 0.163 \\
15 & 0.368 & 0.396 & 0.504 & 0.589 \\
16 & 1.314 & 1.432 & 1.822 & 2.128 \\
\end{tabular}}
\def \tableuscntsingle{
\begin{tabular}{l | l  l  l  l }
$n \backslash k$ & 3 & 4 & 5 & 6 \\ \hline
8 & 0.003 & 0.002 & 0.003 & 0.002 \\
9 & 0.027 & 0.026 & 0.026 & 0.027 \\
10 & 0.285 & 0.302 & 0.302 & 0.309 \\
11 & 3.520 & 3.657 & 3.666 & 3.766 \\
12 & 42.741 & 44.752 & 44.791 & 45.716 \\
\end{tabular}}
\def \tableuscntmultiple{
\begin{tabular}{l | l  l  l }
$n \backslash k$ & 3 & 4 & 5 \\ \hline
8 & 0.003 & 0.004 & 0.006 \\
9 & 0.027 & 0.041 & 0.055 \\
10 & 0.286 & 0.453 & 0.637 \\
11 & 3.554 & 5.735 & 8.359 \\
12 & 42.842 & 74.717 & 110.991 \\
\end{tabular}}
\def \tableVavsingle{
\begin{tabular}{l | l  l  l  l }
$n \backslash k$ & 3 & 4 & 5 & 6 \\ \hline
8 & 0.000 & 0.004 & 0.007 & 0.007 \\
9 & 0.001 & 0.020 & 0.042 & 0.040 \\
10 & 0.003 & 0.070 & 0.212 & 0.289 \\
11 & 0.007 & 0.396 & 2.024 & 3.326 \\
12 & 0.028 & 2.665 & 20.160 & 41.060 \\
13 & 0.102 & 18.101 & 201.086 & 519.022 \\
\end{tabular}}
\def \tableVavmultiple{
\begin{tabular}{l | l  l  l  l }
$n \backslash k$ & 3 & 4 & 5 & 6 \\ \hline
8 & 0.000 & 0.000 & 0.005 & 0.032 \\
9 & 0.000 & 0.002 & 0.019 & 0.130 \\
10 & 0.002 & 0.009 & 0.081 & 0.552 \\
11 & 0.007 & 0.038 & 0.339 & 2.371 \\
12 & 0.027 & 0.148 & 1.416 & 10.288 \\
13 & 0.101 & 0.586 & 5.938 & 44.756 \\
\end{tabular}}
\def \tablepermlabavsingle{
\begin{tabular}{l | l  l  l  l }
$n \backslash k$ & 3 & 4 & 5 & 6 \\ \hline
8  & 0.018 & 0.019  & 0.029 & 0.034 \\
9  & 0.016 & 0.047 & 0.122 & 0.151 \\
10 & 0.024 &  0.147 & 0.581 & 0.757 \\
11 & 0.051 & 0.915 & 3.980 & 6.795 \\
12 & 0.123 & 5.020 & 35.127 & 74.387 \\
13 & 0.286 & 30.549 & 333.422 &  911.032 \\
\end{tabular}}
\def \tablePiDDavsingle{
\begin{tabular}{l | l  l  l  l }
$n \backslash k$ & 3 & 4 & 5 & 6 \\ \hline
8 & 0.011 & 0.007 & 0.011 & 0.009 \\
9 & 0.016 & 0.013 & 0.028 & 0.009 \\
10 & 0.027 & 0.034 & 0.067 & 0.039 \\
11 & 0.046 & 0.099 & 0.239 & 0.151 \\
12 & 0.087 & 0.361 & 0.952 & 0.965 \\
13 & 0.149 & 1.640 & 5.310 & 6.423 \\
14 & 0.219 & 6.434 & 24.810 & 34.897 \\
15 & 0.561 & 24.339 & 115.127 & 199.916 \\
16 & 1.672 & 91.030 & 567.907 & 1254.01 \\
\end{tabular}}
\def \tablePiDDcntsingle{
\begin{tabular}{l | l  l  l  l }
$n \backslash k$ & 3 & 4 & 5 & 6 \\ \hline
8 & 0.098 & 0.078 & 0.031 & 0.030 \\
9 & 0.263 & 0.249 & 0.087 & 0.040 \\
10 & 1.918 & 3.329 & 1.062 & 0.191 \\
11 & 15.638 & 40.400 & 17.453 & 3.671 \\
12 & 105.241 & 532.328 & 249.606 & 58.236 \\
\end{tabular}}
\def \tablePiDDcntmultiple{
\begin{tabular}{l | l  l }
$n \backslash k$ & 3 & 4 \\ \hline
8 & 0.095 & 0.137 \\
9 & 0.269 & 1.612 \\
10 & 1.824 & 20.139 \\
11 & 15.336 & 236.632 \\
\end{tabular}}
\newcommand{\ignore}[1]{}
\begin{document} 

\date{}
\title[]{Fast Algorithms for Finding Pattern Avoiders and Counting Pattern Occurrences in Permutations}  \author{William Kuszmaul}
\maketitle
\vspace{-.9 cm}
\begin{center}
Stanford University \\
\emph{kuszmaul@stanford.edu}
\end{center}




\begin{abstract}
Given a set $\Pi$ of permutation patterns of length at most $k$, we
present an algorithm for building $S_{\le n}(\Pi)$, the set of
permutations of length at most $n$ avoiding the patterns in $\Pi$, in
time $O(|S_{\le n - 1}(\Pi)| \cdot k + |S_{n}(\Pi)|)$. Additionally,
we present an $O(n!k)$-time algorithm for counting the number of
copies of patterns from $\Pi$ in each permutation in
$S_n$. Surprisingly, when $|\Pi| = 1$, this runtime can be improved to
$O(n!)$, spending only constant time per permutation. Whereas the
previous best algorithms, based on generate-and-check, take
exponential time per permutation analyzed, all of our algorithms take
time at most polynomial per outputted permutation.

If we want to solve only the enumerative variant of each problem, computing
$|S_{\le n}(\Pi)|$ or tallying permutations according to
$\Pi$-patterns, rather than to store information about every
permutation, then all of our algorithms can be implemented in
$O(n^{k+1}k)$ space.

Using our algorithms, we generated $|S_5(\Pi)|, \ldots, |S_{16}(\Pi)|$
for each $\Pi \subseteq S_4$ with $|\Pi| > 4$, and analyzed OEIS
matches. We obtained a number of potentially novel pattern-avoidance
conjectures.

Our algorithms extend to considering permutations in any set closed
under standardization of subsequences. Our algorithms also partially
adapt to considering vincular patterns.
\end{abstract}

\section{Introduction}


Over the past thirty years, the study of permutation patterns has
become one of the most active topics in enumerative
combinatorics. Given a pattern $\pi \in S_k$ and a permutation
$\tau \in S_n$, a \emph{$\pi$-hit} or \emph{copy of $\pi$} in $\tau$
is a $k$-letter subsequence of $\tau$ order-isomorphic to $\pi$. For
example, $857$ is a $312$-hit in $18365472$
(Figure \ref{fighitexample}). If $\tau$ contains no $\pi$-hits, we say
that $\tau$ \emph{avoids} $\pi$ and is in $S_n(\pi)$. Moreover, for a
set of patterns $\Pi$, $S_n(\Pi) =
\cap_{\pi \in \Pi}S_n(\pi)$.

Permutation patterns were first introduced in 1968, when Donald Knuth
characterized the stack-sortable $n$-permutations as exactly those
avoiding $312$, of which there are the Catalan number
$C_n$ \cite{Knuth68}. In 1985, Simion and Schmidt began a systematic
study of the combinatorial structures of $S_n(\Pi)$ for $\Pi \subseteq
S_3$ \cite{simion85}. Since then, permutation patterns have found
applications throughout combinatorics, as well as in computer science,
computational biology, and statistical mechanics \cite{kitaev11}. In
addition to the combinatorial structures of $\Pi$-hits being of
interest for individual $\Pi$, researchers have worked to build a more
general theory. The most famous result is the former Stanley-Wilf Conjecture,
posed in the 1980s independently by Richard Stanley and Herbert Wilf,
and proven in 2004 by Marcus and Tardos, which prohibits $|S_n(\Pi)|$
growing at a more than exponential rate \cite{marcus04}. Other work
has focused on characterizing when two sets $\Pi_1$ and $\Pi_2$ are
\emph{Wilf-equivalent}, meaning that $|S_n(\Pi_1)| = |S_n(\Pi_2)|$ for
all $n$ \cite{backelin07, kitaev11}.

Unfortunately, running large-scale experiments involving permutation
patterns is generally regarded as quite difficult \cite{albert01}. In
particular, detecting whether a pattern $\pi$ appears in a permutation
$w$ is NP-hard \cite{bose98}. In this paper, however, we will
circumvent this problem by detecting not whether $\pi$ appears in a
single permutation $w$, but instead finding the $\pi$-hits in large
collections of permutations, allowing us to obtain algorithms which
run in polynomial (and sometimes even constant) time per
permutation. In contrast, the best previously known algorithms, based
on generate-and-check, run in exponential time per permutation.

Significant research has already been conducted towards finding a fast
algorithm for determining whether $\tau \in S_n(\pi)$, which we will
refer to as the \emph{PPM} problem.

\begin{center}
\textbf{Permutation Pattern Matching Problem (PPM):} Given $\tau \in S_n$
and $\pi \in S_k$, determine whether $\tau \in S_n(\pi)$.
\end{center}

In 1998, Bose, Buss, and Lubiw showed that PPM is NP-hard in
general \cite{bose98}. Since then, research on PPM algorithms has
traveled down two paths, the first to find an exponential-time
algorithm with a small exponent, and the second to find fast PPM
algorithms for special cases of $\pi$. Notable progress in the first
direction includes an $O(1.79^n\cdot nk)$ algorithm due to Bruner and
Lackner
\cite{bruner12}, and a $2^{O(k^2\log k)}\cdot n$ algorithm due to
Guillemot and Marx \cite{guillemot14}. Notable progress in the second
direction includes polynomial-time algorithms when $\pi$ is separable
\cite{bose98, ibarra97, albert01, yugandhar05, han10}; an easily
parallelized linear-time algorithm when $|\pi| = 4$ \cite{han14, albert01}; and
an algorithm whose runtime depends on a natural complexity-measure of
$\pi$, running fast for $\pi$ with small complexity-measure
\cite{ahal08}. Additionally, results have been found for more general
types of patterns such as vincular patterns \cite{bruner13}.

For experimental research purposes, however, most permutation-pattern computations involve
not just one permutation, but many. Indeed, the two most common
computations are to build all of $S_{\le n}(\pi)$, or to count
copies of $\pi$ in each $\tau \in S_n$.

\begin{center}
\textbf{Permutation Pattern Avoiders Problem (PPA):} Given a
permutation $\pi \in S_k$ and $n \in \mathbb{N}$, construct all
permutations of size at most $n$ that avoid the pattern $\pi$.
\end{center}

\begin{center}
\textbf{Permutation Pattern Counting Problem (PPC):} Given a
permutation $\pi \in S_k$ and $n \in \mathbb{N}$, find the number of
copies of $\pi$ in each permutation of size at most $n$.
\end{center}

One common approach to PPA and PPC, which we will refer to as
\emph{generate-and-check}, is to iterate through candidate
permutations and apply PPM to each candidate \cite{albert01, PermLab,
  sage}. However, recent algorithms introduced by Inoue, Takahisa, and
Minato take a different approach, representing sets of permutations in
highly compressed data structures called \pidd's, and then using
\pidd-set-operations to solve PPA and PPC \cite{inoue14,
  inoue14followup}. Although the asymptotic nature of their algorithms
is unknown due to the enigmatic compression performance of \pidd's,
their algorithms experimentally run much faster than the
generate-and-check approach.

In this paper, we introduce the first provably fast algorithms for PPA
and PPC. Surprisingly, PPC can be solved in $\Theta(n!)$ time (Theorem
\ref{thmfactorial}), spending only amortized constant time per
permutation despite $\pi$ appearing $n! {n \choose k}/{k!}$ times as a
pattern in $S_n$. Similarly, PPA can be solved in $O(|S_{\le
n-1}(\pi)|\cdot k + |S_n(\pi)|)$ time, spending linear time per output
permutation. Our algorithms are the first proven to spend
sub-exponential time per output permutation.

In Section \ref{secmem}, for the enumerative versions of PPA and PPC,
we show how to implement both algorithms in $O(n^{k + 1}k)$ space,
making them practical even for very large computations on small
machines.

Both algorithms extend to considering a set of patterns $\Pi$ (of
possibly varying lengths), rather than just a single pattern
$\pi$. Interestingly, their runtimes depend only on $k
= \max_{\pi \in \Pi}|\pi|$, building $S_n(\Pi)$ in time $O(|S_{\le n -
1}(\Pi)| \cdot k + |S_n(\Pi)|)$ (Theorem \ref{thmusavoid2}) and
counting $\Pi$-patterns in each $\tau$ in $S_n$ in time $O(n!\cdot k)$
(Theorem
\ref{thmdownset}). Additionally, our algorithms easily adapt to
finding avoiders and counting copies of patterns in $\Pi$ in arbitrary
downsets of permutations -- for example, efficiently finding
the separable permutations which are $\Pi$-avoiders. We also partially
extend our results to when $\pi$ is a vincular pattern.

Our algorithms open new doors for data-driven research studying the
structure of permutation classes. Previously daunting large-scale
computations are now easily within reach. For example, our software
can generate $|S_1(\Pi)|, \ldots, |S_{16}(\Pi)|$ for every $\Pi
\subseteq S_4$ (regardless of $|\Pi|$) in just under twenty-five
minutes on our Amazon C3.8xlarge machine\footnote{Run in parallel with
  hyperthreading enabled for a total of 36 hardware threads. Our code
  is parallelized using Cilk.}. A brief analysis of the resulting
number sequences reveals that hundreds of OEIS sequences seemingly
previously unaffiliated with pattern avoidance can be used to
enumerate $|S_n(\Pi)|$ for some $\Pi$. Moreover, when we filter out
the matches which can be proven using insertion encoding techniques,
we are left with 32 OEIS sequences, matching with 289 sets of
patterns, each of which appears to represent a novel and unsolved
conjecture.




The layout of this paper is as follows. In Section \ref{secprelim}, we
introduce (mostly standard) conventions. In Section \ref{secalg1}, we
introduce and analyze a simplified version of our PPA algorithm, which
is then refined in Section \ref{secalg2}, and extended to PPC in
Section \ref{secalg3}. In Section \ref{secmem}, we modify our
algorithms to achieve good space utilization. In
Section \ref{sectest}, we compare our algorithms (running in serial)
experimentally to the best alternatives.  In
Section \ref{secconjectures}, we use our algorithms to run large-scale
computations, automatically generating hundreds of conjectures, some
of which seem quite interesting.  Finally, Section
\ref{secconclusion} concludes with directions of future work and some results on vincular patterns.

\section{Definitional Preliminaries}\label{secprelim}
In this section, we set conventions for the paper. We begin by
discussing pattern avoidance.

\begin{defn}
  A \emph{permutation in $S_n$} is a word containing each letter from $1$ to $n$ exactly once.
\end{defn}

\begin{defn}
  Given a word $\tau$ of $n$ distinct letters, the \emph{standardization}
  $\st(\tau)$ is the permutation $\sigma \in S_n$ such that $\tau_i < \tau_j$ exactly
  when $\sigma_i < \sigma_j$.
\end{defn}

\begin{ex}
  The standardization of $5397$ is $\st(5397) = 2143$.
\end{ex}

\begin{defn}
  Two words $\tau_1$ and $\tau_2$ are order-isomorphic if $\st(\tau_1) = \st(\tau_2)$.
\end{defn}

\begin{defn}
  Let $\pi \in S_k$ and $\tau \in S_n$. A \emph{$\pi$-hit} is any subword
  of $\tau$ order-isomorphic to $\pi$. On the other hand, $\tau$
  \emph{avoids the pattern} $\pi$ if $\tau$ has no $\pi$-hits
\end{defn}

\begin{ex}
  An example $123$-hit in $18365472$ is the subword $367$, while an
  example $312$-hit is the subword $857$. These hits are shown
  graphically in Figure \ref{fighitexample}. Observe however, that
  there is no $3124$-hit in $18365472$. Thus $18365472$ avoids the
  pattern $3124$.
\end{ex}

\begin{figure}
  \begin{center}
\begin{tabular}{c c c}
  \includegraphics[scale=.4]{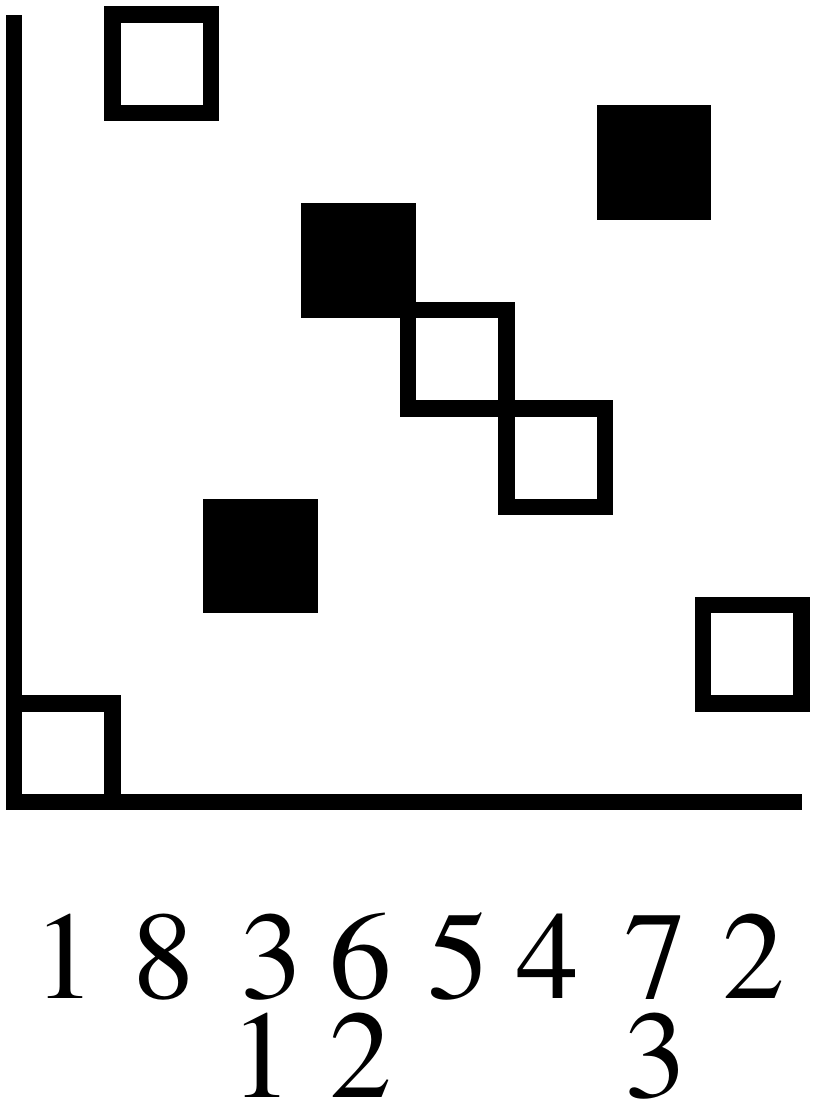} & \hspace{1 cm} &
  \includegraphics[scale=.4]{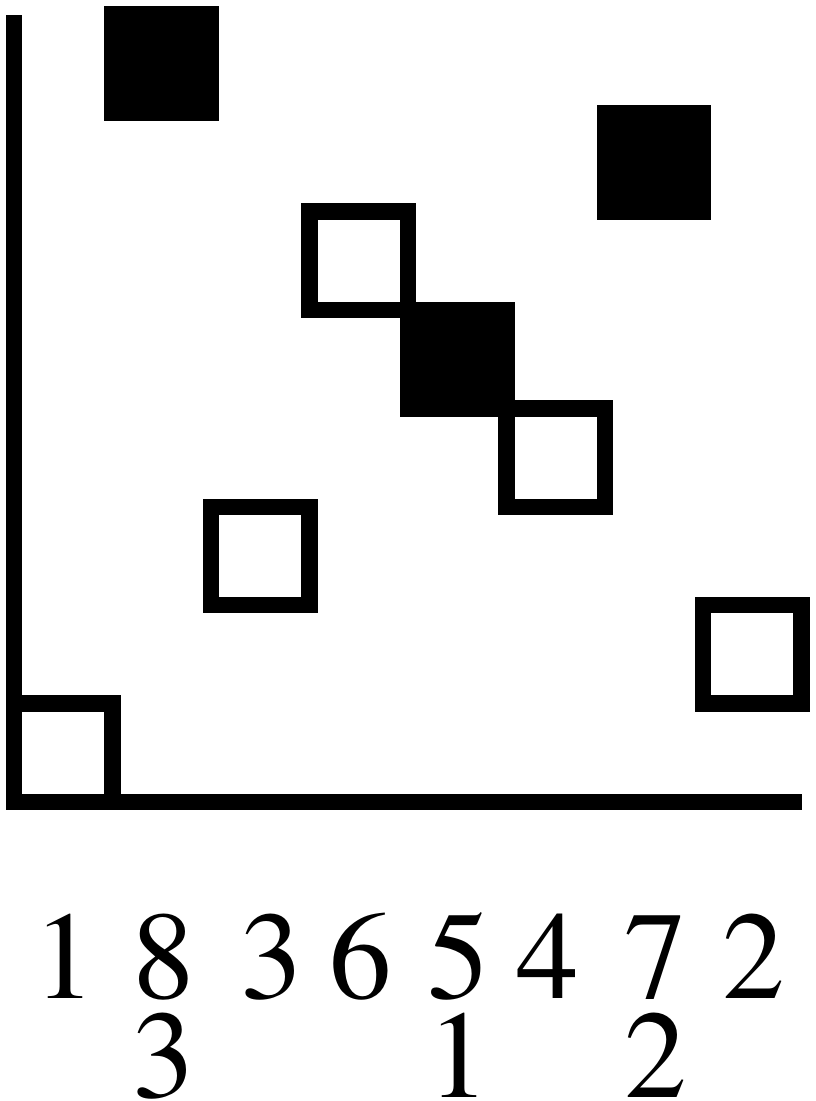}
\end{tabular}
\end{center}
\caption{Example $123$-hit and $312$-hit in 18365472. In this figure,
  a permutation is represented graphically.  A square is placed at
  position $i,j$ when the $i$th element of the permutation is $j$.  In
  the left figure the subword 367 is shown to form a $123$-hit, and in
  the right figure the subword 857 is shown to form a $312$-hit.}
\label{fighitexample}
\end{figure}

Similarly, if $\Pi$ is a set of permutations, then the $\Pi$-hits are
just the $\pi$-hits for each $\pi \in \Pi$. And a permutation $\tau \in
S_n$ avoids $\Pi$ if it has no $\Pi$-hits. In this context, $\Pi$ may
be referred to as a \emph{set of patterns}, and we say that $\tau$
\emph{avoids the patterns} in $\Pi$.

Next, we introduce common short-hands for sets which we will study.
\begin{defn}
We use $S_{\le n}$ to denote $S_1 \cup S_2 \cup \cdots \cup S_n$.  
\end{defn}

\begin{defn}
  Let $\pi$ (resp. $\Pi$) be a pattern (resp. set of patterns), and
  $D$ be a set. Then $D(\pi)$ (resp. $D(\Pi)$) is the subset of $D$
  which avoids $\pi$ (resp. $\Pi$).
\end{defn}

\begin{ex}
  Since $S_n$ is the set of permutations of size $n$, the set $S_n(123)$ is
  the set of permutations of size $n$ with no increasing subsequence
  of length three.
\end{ex}

Our algorithms will build data about permutations up from data about
smaller permutations. Consequently, they are designed to work on
\emph{downsets} of permutations.

\begin{defn}
  A set of permutations $D$ is a \emph{downset} if for all $\tau \in D$, for all non-empty subwords $\tau'$ of $\tau$, $\st(\tau') \in D$.
\end{defn}

Examples of downsets include $S_{\le n}$, the permutations with $j$ or
fewer inversions (for a constant $j$), the permutations with $j$ or
smaller major index, the permutations avoiding a given set of
patterns, and the separable permutations. Additionally, the unions and
the intersections of downsets are also downsets.

Next, we introduce notation for obtaining from a permutation $\tau$ a new
permutation that is either one smaller or one larger in size.

\begin{defn}
  Given $\tau \in S_n$ and $i \in \{1, \ldots, n+1\}$, we define $\tau
  \uparrow^i$ to be the permutation obtained by inserting $n+1$ to be
  in the $i$-th position of $\tau$.
\end{defn}

\begin{defn}
  Given $\tau \in S_n$ and $i \in \{1, \ldots, n\}$, we define $\tau
  \downarrow _i$ to be the standardization of the word obtained by
  removing the letter $(n-i+1)$ from $\tau$.
\end{defn}

\begin{ex}
For example, $13524\uparrow^2 = 163524$, while $13524\downarrow_2 = \st(1352) = 1342$.
\end{ex}

Note that $\uparrow^i$ and $\downarrow_i$ are not inverses. Whereas
$\uparrow^i$ inserts a letter into the $i$-th position, $\downarrow_i$
removes the $i$-th largest-valued letter. Though subtle, these
distinctions will play a critical role in the optimizations presented
in Section \ref{secalg2}.

It will often be useful to refer to the word formed by the largest
$k$-letters of a permutation as the $k$\emph{-upfix} of the
permutation. For example, the $3$-upfix of $15234$ is $534$.






\section{PPA in time polynomial per avoider}\label{secalg1}

In this section, we introduce the key ideas for obtaining an
asymptotically fast algorithm to build $S_{\le n}(\Pi)$. Combined,
these ideas yield a simple algorithm running in time $O(S_{\le n -
1}(\Pi) n^2k)$, the first algorithm to spend only polynomial time per
$\Pi$-avoiding permutation. This algorithm can additionally be adapted
to build $D(\Pi)$ for a downset $D$ (assuming constant-time membership
queries for $D$.) In later sections, we will introduce techniques for
reducing the polynomial term and for achieving good space bounds.

Our algorithm relies fundamentally on a simple observation which
transforms pattern detection into a dynamic programming
problem. Whereas detecting whether a permutation $\tau \in S_n$ contains
a pattern $\pi \in S_k$ naively takes time $O\left({n \choose k}k\right)$,
Proposition \ref{propfundamental} shows how to perform the same
computation in polynomial time using information about smaller
permutations.

\begin{prop}
Let $\Pi$ be a set of patterns, each of length at most $k$, and let
$\tau$ be a permutation length $n$. Pick $X$ to be any set of at least
$\min(k + 1, n)$ distinct entries of $\tau$. Then $\tau$ lies in
$S_n(\Pi)$ if and only if the following two conditions hold.
\begin{enumerate}
\item $\tau \notin \Pi$ and
\item for each entry $x \in X$, the standardization of $\tau$ with the entry $x$ removed lies in $S_{n-1}(\Pi)$.
\end{enumerate}
\end{prop}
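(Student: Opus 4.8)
The plan is to prove the contrapositive-style biconditional by showing that each direction follows from the definition of a $\Pi$-hit together with a pigeonhole argument on the set $X$. The forward direction is essentially immediate: if $\tau \in S_n(\Pi)$, then $\tau$ itself is not a forbidden pattern (giving condition 1), and since avoidance is preserved under standardization of subwords (because deleting an entry from a $\Pi$-avoider cannot create a $\Pi$-hit that was not already present), condition 2 holds for every entry, in particular for each $x \in X$. So the real content lies in the reverse direction.

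For the reverse direction, I would argue the contrapositive: assume $\tau \notin S_n(\Pi)$ and show that one of the two conditions must fail. So suppose $\tau$ contains a $\Pi$-hit, i.e., there is some $\pi \in \Pi$ and a subword $w$ of $\tau$ with $\st(w) = \pi$. If $w = \tau$ (which forces $n = |\pi| \le k$), then $\tau \in \Pi$ and condition 1 fails, and we are done. Otherwise $w$ is a \emph{proper} subword, so it uses at most $k$ of the $n$ entries of $\tau$. The key observation is that $|X| \ge \min(k+1, n)$ entries, while the complement of $w$ (the entries of $\tau$ \emph{not} used by the hit) is nonempty. The hard part, and the crux of the whole proposition, is to guarantee that $X$ contains at least one entry that is \emph{not} part of the hit $w$: if I can find such an entry $x \in X \setminus w$, then removing $x$ leaves $w$ intact inside $\tau$, so the standardization of $\tau$ with $x$ removed still contains $\st(w) = \pi$ and hence fails to avoid $\Pi$, violating condition 2.

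The pigeonhole step is where I expect the main (though modest) obstacle to lie, and it is exactly why the hypothesis $|X| \ge \min(k+1, n)$ takes that particular form. Since $w$ is a proper subword order-isomorphic to a pattern of length at most $k$, it occupies at most $k$ entries of $\tau$. If $n \ge k+1$, then $|X| \ge k+1 > k$ forces $X$ to contain an entry outside the at-most-$k$ entries of $w$. If instead $n \le k$, then $|X| \ge n$ means $X$ is all of $\tau$, and since $w$ is a proper subword it misses at least one entry of $\tau$, which therefore lies in $X$. In both cases we obtain the desired $x \in X$ avoiding the hit. I would present this as a short case split on whether $n > k$ or $n \le k$, verifying in each case that the counting inequality $|X| > |w|$ (where $|w| \le k$ and $|w| < n$) holds strictly, which is precisely what pigeonhole needs.

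Finally, I would assemble the pieces: the forward direction uses only the definition of avoidance and its closure under subword standardization; the reverse direction uses the case analysis above to produce an entry $x \in X$ whose removal preserves some $\Pi$-hit. Translating ``$\st(\tau \text{ with } x \text{ removed})$ contains $\pi$'' back into ``this standardized permutation lies outside $S_{n-1}(\Pi)$'' is routine, since standardization preserves order-isomorphism and hence the presence of the hit $\st(w) = \pi$. I do not anticipate technical difficulties beyond carefully stating that deleting an entry outside a fixed hit leaves the relative order of the hit's entries unchanged, so that it remains a $\pi$-hit after standardization.
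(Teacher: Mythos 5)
Your proof is correct and takes essentially the same approach as the paper: the forward direction uses closure of avoidance under subword standardization, and the reverse direction rests on the same counting fact, namely that a hit occupies at most $\min(k, n-1)$ entries unless it is all of $\tau$, so $X$ must contain an entry outside any proper hit. The only cosmetic difference is that you argue the contrapositive (exhibiting an $x \in X$ missed by the hit, so Condition (2) fails) where the paper argues by contradiction (every $x \in X$ must lie in the hit, which is impossible); these are the same argument read in opposite directions.
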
\label{propfundamental}
\begin{proof}
  Suppose $\tau \in S_n(\Pi)$. Then Condition (1) holds trivially, and
  Condition (2) holds because $S_{\le n}(\Pi)$ is a downset.

  On the other hand, suppose Conditions (1) and (2) hold. Observe that
  if the standardization of $\tau$ with the letter $x$ removed lies in
  $S_{n-1}(\Pi)$, then any $\Pi$-hit in $\tau$ must use the letter
  $x$. Thus Condition (2) implies that any $\Pi$-hit in $\tau$ must
  use at least $\min(k+1, n)$ distinct letters of $\tau$. If $k < n$,
  this is impossible, since the longest pattern in $\Pi$ is length at
  most $k$. If $k \ge n$, then $\tau$ can only contain a $\Pi$-hit if
  that $\Pi$-hit comprises all of $\tau$, a contradiction by Condition
  (1).
\end{proof}

\begin{ex}
In Figure \ref{fundamentalex}, we apply
Proposition \ref{propfundamental} to $25143$ and to $34215$ in order
to determine whether each avoids $123$. For each permutation, we
remove its first, second, third, and fourth letters, standardize the
result, and record whether it avoids $123$. Assuming that we have
already computed which $4$-letter permutations avoid $123$, this
entire process takes polynomial time for each permutation.

Because all four tests pass for $25143$, we conclude that it avoids
the pattern $123$. On the other hand, $34215$ fails two tests and
contain a $123$ pattern.

The decision to remove each the first four letters was arbitrary,
since Proposition \ref{propfundamental} allows us to use any four
letters. In fact, our actual algorithms will always use the letters
$n, n-1, \ldots, (n - \max(k + 1, n) + 1)$ when testing for
avoidance. Although unmotivated for the time being, this decision will
make optimizations in Section \ref{secalg2} easier to discuss.
\end{ex}

\begin{figure}[h]
  \begin{center}
  \begin{tabular}{c  c}
  \begin{tabular}{| l | r | r |}
 \hline  Letter removed & Permutation in $S_4$                     & Avoids 123? \\ \hline
      First letter:  & \ {\color{red}2}5143               &  \\
                   & \ \phantom{2}4132                   &  yes \\ \hline 
      Second letter:  & \ 2{\color{red}5}143               &     \\               
                   & \ 2\phantom{5}143                   &  yes \\ \hline 
      Third letter:  & \ 25{\color{red}1}43               &     \\
                   & \ 14\phantom{1}32                   &  yes \\ \hline 
      Fourth letter:  & \ 251{\color{red}4}3               &     \\                 
                   & \ 241\phantom{4}3                   &  yes \\ \hline 
  \end{tabular}
&
  \begin{tabular}{| l | r | r |}
 \hline   Letter removed & Permutation in $S_4$            & Avoids 123? \\ \hline
       First letter:  & \ {\color{red}3}4215               &  \\
                   & \ \phantom{3}3214                   & yes \\ \hline 
       Second letter:  & \ 3{\color{red}4}215               &     \\               
                   & \ 3\phantom{4}214                   & yes \\ \hline 
       Third letter:  & \ 34{\color{red}2}15               &    \\
                   & \ 23\phantom{2}14                   & no \\ \hline 
       Fourth letter:  & \ 342{\color{red}1}5               &     \\                 
                   & \ 231\phantom{1}4                   & no \\ \hline 
  \end{tabular}

\end{tabular}
  \caption{Applying Proposition \ref{propfundamental} to determine whether $25143$ and whether $34215$ avoid the pattern $123$.}
  \label{fundamentalex}
    \end{center}
 
\end{figure}

Armed with Proposition \ref{propfundamental} we can now derive a fast
algorithm. The simplest algorithm for building $S_{\le n}(\Pi)$ is to
brute-force check whether each permutation $\tau$ in $S_{\le n}$ is
$\Pi$-avoiding. If we do this by checking every $|\pi|$-subsequence of
$\tau$ for each $\pi \in
\Pi$, this takes time
$$ O\left(\sum_{\pi \in \Pi}n! {n \choose |\pi|} |\pi| \right).$$

This formula becomes simpler if $\Pi$ comprises $l$ permutations of
size $k$. In this case, the algorithm runs in time $O\left(n!  \cdot
{n \choose k} kl\right)$.

Our first task is to shrink the $n!$ term. Observe that $S_{\le n} (\Pi)$ is
a downset. Consequently, every element in $S_n(\pi)$ can be
obtained by inserting $n$ into some position of a permutation in $S_{n
- 1}(\Pi)$. Thus we can build $S_n(\Pi)$ from $S_{n-1}(\Pi)$ by
checking pattern-avoidance for each permutation $\tau$ obtained by
inserting $n$ into some position of an element in $S_{n -
1}(\Pi)$. Since there are at most $|S_{n - 1}(\Pi)| \cdot n$ such $\tau$, this yields
an algorithm which generates $S_{\le n}(\Pi)$ in time $$O\left(|S_{\le
{n-1}}(\Pi)| \cdot n \cdot {n \choose k} \cdot kl\right).$$

Our next task is to shrink the ${n \choose k}$ and eliminate the
dependence on $l$. Recall that proposition \ref{propfundamental} shows
that if $S_{n-1}(\Pi)$ is already computed, then checking whether
$\tau \in S_n(\Pi)$ for some $\tau \in S_n$ can be achieved in $O(kn)$
time, rather than in $O\left({n \choose k} \cdot kl\right)$ time. In
particular, to see that $\tau \in S_n(\Pi)$ we need only check that
$\tau \not\in \Pi$ and that $\tau\downarrow_{i} \in S_{n-1}(\Pi)$ for
each $i \in [\min(k + 1, n)]$ (Algorithm \ref{algavoidance}). This
brings our total runtime down to $O(|(S_{\le n-1}(\Pi)| \cdot n^2k)$.
Note that the number of patterns in $\Pi$ does not increase the time
needed to detect whether a permutation is $\Pi$-avoiding.

\begin{thm}\label{thmbasicalg}
Let $\Pi$ be a set of patterns and $k = \max_{\pi \in \Pi} |\pi|$.
   The set $S_{\le n}(\Pi)$ can be constructed in $O(|(S_{\le
   n-1}(\Pi)| \cdot n^2k)$ time.
\end{thm}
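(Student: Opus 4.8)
The plan is to construct $S_{\le n}(\Pi)$ one level at a time, using the downset structure to enumerate candidates and Proposition \ref{propfundamental} to test each candidate cheaply. I would maintain each set $S_m(\Pi)$ in a hash table keyed by permutation, so that membership queries cost $O(1)$ in the word-RAM model, and process $m = 1, 2, \ldots, n$ in turn, building $S_m(\Pi)$ from the previously computed $S_{m-1}(\Pi)$. The output $S_{\le n}(\Pi)$ is the union of the levels.

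The candidate-generation step is justified by the downset property. If $\sigma \in S_m(\Pi)$, then deleting the largest letter $m$ and standardizing yields $\sigma\downarrow_1 \in S_{m-1}(\Pi)$, and $\sigma = (\sigma\downarrow_1)\uparrow^i$ where $i$ is the position $m$ occupied in $\sigma$. Hence every element of $S_m(\Pi)$ appears among the $m$ permutations $\tau\uparrow^1, \ldots, \tau\uparrow^m$ as $\tau$ ranges over $S_{m-1}(\Pi)$. This produces at most $|S_{m-1}(\Pi)| \cdot m$ candidates, with no element of $S_m(\Pi)$ missed and no permutation outside $S_m$ generated.

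For each candidate $\rho = \tau\uparrow^i \in S_m$, I would apply Proposition \ref{propfundamental} with $X$ the set of the $\min(k+1,m)$ largest-valued entries of $\rho$: declare $\rho \in S_m(\Pi)$ exactly when $\rho \notin \Pi$ and $\rho\downarrow_j \in S_{m-1}(\Pi)$ for each $j \in \{1, \ldots, \min(k+1,m)\}$. Testing $\rho \notin \Pi$ is a single $O(1)$ hash lookup, since only patterns of length exactly $m$ can match; each $\rho\downarrow_j$ is computed in $O(m)$ time by removing the appropriate letter and standardizing, then looked up in $S_{m-1}(\Pi)$ in $O(1)$ time. As there are $O(k)$ values of $j$, each candidate costs $O(km)$, so level $m$ costs $O(|S_{m-1}(\Pi)| \cdot m \cdot km) = O(|S_{m-1}(\Pi)| \cdot m^2 k)$. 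Summing over levels and bounding $m^2 \le n^2$ gives
$$\sum_{m=1}^{n} O\!\left(|S_{m-1}(\Pi)| \cdot m^2 k\right) = O\!\left(n^2 k \sum_{m=1}^{n} |S_{m-1}(\Pi)|\right) = O\!\left(|S_{\le n-1}(\Pi)| \cdot n^2 k\right),$$
which is the claimed bound.

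The main obstacle is not correctness, which follows at once from Proposition \ref{propfundamental} together with the downset property, but rather justifying the $O(1)$-per-operation cost of the membership tests. This requires committing to the word-RAM model and a hashing scheme on permutations with constant-time (amortized or expected) lookup; one must also confirm that each standardization $\rho\downarrow_j$ really runs in $O(m)$ time, which is where the extra factor of $m$ beyond the $O(k)$ of Proposition \ref{propfundamental} enters. Both points are routine, and the later sections presumably revisit the hashing and the $O(m)$ standardization cost to shave the polynomial factor down.
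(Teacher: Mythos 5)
Your proposal is correct and follows essentially the same route as the paper: generate candidates by inserting the new maximum into each position of the avoiders one size smaller (justified by the downset property), test each candidate via Proposition \ref{propfundamental} against a hash table of smaller avoiders, and charge $O(m)$ per $\downarrow_j$ computation, giving $O(|S_{\le n-1}(\Pi)| \cdot n^2 k)$ overall. This is exactly the paper's Algorithm \ref{algbuildavoiders} together with its stated analysis, so there is nothing further to reconcile.
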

\begin{proof}
  By Proposition \ref{propfundamental}, this is accomplished through
  Algorithm \ref{algbuildavoiders}. Note that one can easily obtain
  each $\tau \downarrow_i$ from $\tau$ in $O(n)$ time.
\end{proof}

\begin{rem}
  Note that for single patterns $\pi$, we have $|S_n(\pi)| \le
  |S_{n+1}(\pi)|$ for all $n$. In particular, depending on $\pi$, one
  of the maps $\tau \rightarrow \tau\uparrow^1$ or
  $\tau \rightarrow \tau \uparrow ^{n+1}$ is an injection from
  $S_n(\pi)$ to $S_{n+1}(\pi)$. Thus for a single pattern, our
  algorithm is efficient even if we only want to compute $S_n(\pi)$,
  with runtime $O(|S_n(\pi)| \cdot n^3k)$, which using results from
  the next section can be reduced to $O(|S_n(\pi)| \cdot nk)$.

  However, $|S_n(\Pi)| \le |S_{n+1}(\Pi)|$ need not be true when $|\Pi|
  > 1$. For example, if $\Pi$ contains the increasing pattern of
  length $a$ and the decreasing pattern of length $b$, then by the
  Erd{\"o}s-Szekeres Theorem, no permutation of length greater than
  $(a+1)(b+1)+1$ is $\Pi$-avoiding  \cite{erdos35}.
\end{rem}

\begin{algorithm}
  \KwIn{Hash table $H$ such that $H \cap S_{n-1} = S_{n-1}(\Pi)$, Hash table $\Pi$, $k:= \max_{\pi \in \Pi}{|\pi|}$, Permutation $\tau \in S_n$}
  \KwOut{Whether $\tau \in S_n(\Pi)$}
  \If{$\tau \in \Pi$} {
    \Return{ false}\;
  }
  \For{$ i \in \{1, \ldots, \min(k+1, n)\} $} {
    \If{$\tau\downarrow_i \not\in H$}{
      \Return{false}\;
    }
  }
  \Return{true}\;
  \caption{\textbf{DetectAvoider}}
  \label{algavoidance}
\end{algorithm}

\begin{algorithm}
  \KwIn{Hash table $\Pi$, $k:= \max_{\pi \in \Pi}{|\pi|}$, $n$}
  \KwOut{A hash table containing $S_{\le n}(\Pi)$}
  UnorderedSet Avoiders\;
  Queue Unprocessed\;
  \If{$1 \not\in \Pi$}
     {Unprocessed.enqueue(1)\;
      Avoiders.add(1)\;}
  \While{not Unprocessed.empty()} {
    Perm := Unprocessed.dequeue()\;
    \For{$i \in \{1, \ldots, Perm.size() + 1\}$} {
      NewPerm := Perm$\uparrow^i$\;
      \If{DetectAvoiders(Avoiders, $\Pi$, $k$, NewPerm)} {
        Avoiders.insert(NewPerm)\;
        \If{NewPerm.size() $< n$} {
          Unprocessed.enqueue(NewPerm)\;
         }
      }
    }
  }
  \Return{Avoiders\;}
  \caption{\textbf{BuildAvoiders}}
  \label{algbuildavoiders}
\end{algorithm}

Observe that Algorithm \ref{algbuildavoiders} can be easily modified
to generate $S_{\le n}(\Pi) \cap D$ for downsets $D$, assuming
membership in $D$ can be determined in constant time. In particular,
prior to checking whether NewPerm is an avoider, we throw out NewPerm
if it is not in $D$. In fact, using the optimized version of
Algorithm \ref{algbuildavoiders} which will be presented in
Section \ref{secalg2} (Theorem \ref{thmusavoid2}), we can build
$D(\Pi)$ in time $O(|D(\Pi) \cap S_{\le n - 1}|n)$. An example
candidate for $D$ is the set of permutations in $S_{\le n}$ with $j$
or fewer inversions for a fixed $j$; in particular, by keeping track
of the inversion statistic for permutations in UnprocessedQueue, one
can detect when NewPerm has inversion statistic greater than $j$ in
constant time.\footnote{In this case, a clever implementation could
further reduce the time to $O(|D(\Pi)| \cdot k)$ by only
considering $\text{Perm}\uparrow^i$ for values of $i$ large enough to
keep the number of inversions below $j$.}

Other examples of downsets include the separable permutations,
and the permutations with major index at most a fixed
constant. Recently, the study of permutation avoidance with respect to
permutation statistics such as major index and inversion number have
become of particular interest \cite{claesson12, opler15}.

\section{Optimizations for PPA}\label{secalg2}

In the preceding section, we presented
Algorithm \ref{algbuildavoiders} which builds $S_{\le n}(\Pi)$ in time
$O(|(S_{\le n-1}(\Pi)| \cdot n^2k)$. In this section, we introduce two
optimizations, each of which reduces the runtime by a factor of $n$,
bringing the total runtime down by a factor of $n^2$ to $O(|S_{\le
n-1}(\Pi)| \cdot k + |S_n(\Pi)|)$. The first optimization relies on
encoding permutations as integers, allowing permutation operations to
be performed using bit manipulations. The second optimization performs
pattern detection on multiple permutations at once, leading to
additional speedup.

Because $S_n$ and $S_n(\pi)$ grow quickly, foreseeable applications of
our algorithms are likely to use permutations that can be easily
stored in a few machine words. Consequently, we assume that words can
be stored as integers, with the $i$-th $j$-bit block representing the
$i$-th letter for some fixed $j$ (which we call the \emph{block-size};
words may not contain a letter larger than $2^j$). Using this
assumption, we can shave off a factor of $n$ from
Algorithm \ref{algbuildavoiders}'s runtime.

\begin{thm}
By representing permutations as integers,
   Algorithm \ref{algbuildavoiders} can be implemented to run in time
   $O(|S_{\le n-1}(\Pi)| \cdot nk)$.
   \label{thmintermediateavoid}
\end{thm}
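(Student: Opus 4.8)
The plan is to keep Algorithm \ref{algbuildavoiders} structurally unchanged and instead charge its cost against a constant-time implementation of each primitive it invokes. Under the integer encoding---the $i$-th letter living in the $i$-th $j$-bit block---the standing assumption is that a permutation occupies $O(1)$ machine words, so any bitwise operation on a whole encoded permutation costs $O(1)$. The bookkeeping is then immediate: the outer loop dequeues each element of $S_{\le n-1}(\Pi)$ exactly once, for each it tries the $O(n)$ insertions $\text{Perm}\uparrow^i$, and for each resulting NewPerm it runs DetectAvoider, whose cost is $O(k)$ provided each $\uparrow^i$, each $\downarrow_i$, and each hash operation is $O(1)$. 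This yields the claimed $O(|S_{\le n-1}(\Pi)| \cdot nk)$ overall. So the entire theorem reduces to showing that, in the integer representation, (a)~$\tau \uparrow^i$, (b)~$\tau \downarrow_i$, and (c)~hashing and membership testing of an encoded permutation each run in constant time.

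Claims (a) and (c) are the routine ones. For $\tau \uparrow^i$ I would split the encoded word into its low part (blocks $1$ through $i-1$) and its high part (blocks $i$ through $n$) using a mask, shift the high part up by one block ($j$ bits), and OR in the value $n+1$ at block position $i$; this is a constant number of shifts, masks, and ORs. For (c), since an encoded permutation is a single $O(1)$-word integer, it can be hashed directly in $O(1)$ time, so both the test $\tau \in \Pi$ and each lookup of $\tau\downarrow_i$ in the hash table $H$ cost $O(1)$. Hence DetectAvoider performs only $O(k)$ work per call.

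The crux---and the step I expect to be the main obstacle---is implementing $\tau \downarrow_i$ in constant time, because $\downarrow_i$ removes a letter whose \emph{position} is not known a priori and then restandardizes. Recall $\tau \downarrow_i$ deletes the letter $v = n-i+1$ and decrements every surviving letter exceeding $v$. I would handle this with word-parallel (SWAR) bit tricks operating on all blocks simultaneously. To locate $v$, form a word that is block-wise equal to the replicated constant $v$, XOR it against $\tau$, and run a standard constant-time zero-block detection to produce a one-hot mask marking the block holding $v$; a count of trailing zeros then recovers its position $p$, after which that block is deleted by masking off the low blocks and shifting the higher blocks down by $j$ bits. To restandardize, I would compute in parallel a mask selecting exactly the blocks whose value exceeds $v$---via a broadcasted subtraction and inspection of the per-block high bits, the usual SWAR ``compare against a replicated constant'' idiom---and subtract $1$ from each selected block in a single masked subtraction. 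Each of these operations touches $O(1)$ words, so all of $\downarrow_i$ is $O(1)$.

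Combining the three constant-time primitives with the cost accounting above gives the stated running time. The only genuinely delicate point is arranging the SWAR comparison and the conditional decrement so that a carry out of one block cannot corrupt a neighboring block; reserving a guard bit per block (equivalently, ensuring the block-size $j$ leaves a spare high bit) resolves this and keeps every primitive truly constant-time.
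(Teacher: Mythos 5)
Your proposal is correct, and its overall cost accounting (each element of $S_{\le n-1}(\Pi)$ dequeued once, $O(n)$ insertions per element, $O(k)$ per call to DetectAvoider given constant-time primitives) matches the paper's. Where you genuinely diverge is in how the primitive $\tau\downarrow_i$ is made constant-time. The paper never computes any $\tau\downarrow_i$ from scratch: it maintains the inverse permutation as extra bookkeeping, updating $(\tau\uparrow^{i+1})^{-1}$ from $(\tau\uparrow^{i})^{-1}$ in $O(1)$ per insertion, and then obtains $\tau\downarrow_{i+1}$ \emph{incrementally} from $\tau\downarrow_i$ by inserting the value $n-i$ at position $\tau^{-1}(n-i+1)$ and killing the letter at position $\tau^{-1}(n-i)$ --- so the only word operations needed are shift/mask style $\insertpos$/$\killpos$, with the inverse telling it where to edit. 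You instead compute each $\tau\downarrow_i$ directly by word-parallel (SWAR) tricks: broadcast the removed value, locate it by XOR and zero-block detection plus $\ctz$, delete its block, and restandardize with a broadcast comparison and masked decrement. This buys you a self-contained constant-time $\downarrow_i$ with no inverse maintenance threaded through the algorithm, at the price of a richer instruction repertoire (multiplication for broadcasting, per-block comparisons) and a reserved guard bit per block, which the paper's nibble-packed representation does not have. Two small caveats in your argument: the naive zero-block detection is not literally one-hot (borrow propagation can set spurious high flags), but since the sought value occurs exactly once and false positives only appear above the true zero block, taking $\ctz$ still recovers the correct position; and note that the paper's inverse-maintenance choice is not just a quirk --- the same maintained inverses are reused later (e.g., Theorem \ref{thmusavoid2} and Theorem \ref{thmdownset} explicitly require them), so the paper's bookkeeping doubles as infrastructure for the subsequent algorithms, whereas your approach would need to be supplemented there.
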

\begin{proof}
The analysis from Theorem \ref{thmbasicalg} of
Algorithm \ref{algbuildavoiders} assumes that each computation of
$\tau\uparrow^i$ or $\tau\downarrow_i$ takes time $O(n)$. In this
analysis, we will show that in the context of
Algorithm \ref{algbuildavoiders}, and with a bit of extra bookkeeping,
these computations can each be reduced to constant time. In
particular, each $\tau\uparrow^i$ can be accomplished in constant time
using bit hacks, and each $\tau\downarrow_{i+1}$ can be obtained from
$\tau\downarrow_{i}$ using bit hacks and information about $\tau^{-1}$.

Note that the following operations are constant time for integers
representing a word $\tau$ stored as a permutation with block-size $j$:
$\tau(i)$, which returns the $i$-th letter of $\tau$; $\setpos(\tau, i, j)$,
which sets the $i$-th letter of $\tau$ to value $j$; $\insertpos(\tau, u,
v)$, which slides the final $n - u + 1$ letters of $\tau$ one position to
the right, and inserts the value $v$ in the $u$-th position; and
$\killpos(\tau, i)$, which slides the final $n - i$ letters of $\tau$ one to
position the left, erasing the $i$-th position. These are each easily
implemented using standard integer operations, including bit shifting,
which allows for multiplication and division by powers of two in
constant time. For example, if $\tau$ is an integer representing a word,
$$\killpos(\tau, i) = \tau \bmod 2^{j(i - 1)} + \lfloor \tau / 2^{ij} \rfloor 2^{j(i-1)},$$
which can be implemented in C as

$$\tau \& ((1 << (j*i - j)) - 1) + (\tau >> (i * j)) << (j * i - j).$$

Using these basic operations, if $\tau$ represents a permutation in
$S_n$, we can compute $\tau\uparrow^i = \insertpos(\tau, i, n+1)$ in
constant time.  We can compute $\tau \downarrow_{i+1}$ from
$\tau \downarrow_i$ and $\tau^{-1}$ (i.e., the integer representation of the
inverse permutation) in constant time by inserting $n-i$ into position
$\tau^{-1}(n - i + 1)$ of $\tau
\downarrow_i$, and then killing the $\tau^{-1}(n-i)$-th letter of the
result. Finally, we can also compute $(\tau\uparrow^{i+1})^{-1}$ from
$(\tau\uparrow^{i})^{-1}$ and $\tau$ in constant time, by incrementing the
$\tau(i)$-th position of $\tau \uparrow^i$ and decrementing the $(n+1)$-th
position. Consequently, for Algorithm \ref{algbuildavoiders}, all
computations of $\tau \uparrow^i$ and $\tau\downarrow_i$ can be performed in
constant time, as long as one also computes and stores
$(\tau\uparrow^i)^{-1}$ when computing $\tau \uparrow ^i$.

This reduces the runtime for Algorithm \ref{algbuildavoiders} from
$O(|S_{\le n-1}(\Pi)| \cdot n^2k)$, as derived in
Theorem \ref{thmbasicalg}, to $O(|S_{\le n-1}(\Pi)| \cdot nk)$, as
desired.
\end{proof}

Surprisingly, we can further optimize the algorithm to shave off
another linear factor. To do this, we must introduce the notion of
an \emph{extension map}.

\begin{defn}
Let $\tau \in S_n(\Pi)$. Let $I$ be the set of $i \in [n + 1]$ such that
$\tau \uparrow^i \in S_{n+1}(\Pi)$. Then the \emph{extension map
$\Psi^\Pi(\tau)$} of $\tau$ is the $(n+1)$-letter bit map with $i$-th
letter equal to 1 exactly when $i \in I$, and equal to 0 otherwise.
\end{defn}

\begin{ex}
Consider $12 \in S_2(123)$. Observe that $\Psi^{123}(12)$ is $110$
because inserting $3$ in either of the first two positions of $12$
results in another $123$-avoider but inserting $3$ in the third
position does not.
\end{ex}

\begin{defn}
Let $j \in [n]$ and $\tau \in S_n(\Pi)$. Let $I$ be the set of $i \in [n
+ 1]$ such that $\tau\uparrow^i \downarrow_{j + 1} \in S_n(\Pi)$.  Then
the \emph{$(n - j + 1)$-ignoring extension map $\Psi^\Pi_{n - j +
1}(\tau)$} of $\tau$ is the $(n+1)$-letter bit map with $i$-th letter
equal to one exactly when $i \in I$.
\end{defn}

\begin{ex}
Consider $53412 \in S_n(123)$. Then the $4$-ignoring extension map of
$53412$ tells us for which $i$ we can insert $6$ in position $i$ to
get a permutation whose only $123$-patterns involve the letter
$4$. Consequently, $\Psi^{123}_4(53412) = 111110$.
\end{ex}

The next theorem shows how to count $\Pi$-avoiders in only $O(k)$ time
per avoider. In addition to the integer operations traditionally used
in the RAM model, the algorithm uses two operations which most modern
machines implement in a single instruction. The first is $\popcount$,
which returns the number of $1$s in an integer's binary
representation. The second is $\ctz$, which returns the number of
trailing $0$-bits of an integer, starting at the least-significant bit
position.

\begin{thm}\label{thmusavoid2}
Let $\Pi$ be a set of patterns, the longest of which is length
$k$. The values $|S_1(\Pi)|, \ldots, |S_{n}(\Pi)|$ can be computed in
time $O(|(S_{\le n-1}(\Pi)| \cdot k)$. Moreover, in time $O(|(S_{\le
n-1}(\Pi)| \cdot k + |S_{n}(\Pi)|)$, one can construct $S_{\le
n}(\Pi)$.
\end{thm}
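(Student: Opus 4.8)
The goal is to prove Theorem \ref{thmusavoid2}: that we can compute the counts $|S_i(\Pi)|$ in time $O(|S_{\le n-1}(\Pi)| \cdot k)$, and construct the full set in time $O(|S_{\le n-1}(\Pi)| \cdot k + |S_n(\Pi)|)$.

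Let me think about what's being claimed. In Theorem \ref{thmintermediateavoid}, we got down to $O(|S_{\le n-1}(\Pi)| \cdot nk)$ by making each $\uparrow^i$ and $\downarrow_i$ constant time. The remaining factor of $n$ comes from: for each avoider $\tau$ in $S_{\le n-1}(\Pi)$, we try inserting the new largest element in each of the $n+1$ positions (that's $n+1 \approx n$ insertions), and for each insertion we do $O(k)$ work (checking the $\min(k+1,n)$ down-removals).

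So the factor we want to shave is this loop over $n+1$ insertion positions. We want to spend only $O(k)$ per avoider $\tau$, not $O(nk)$.

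**The extension map idea.**

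This is where the extension maps $\Psi^\Pi(\tau)$ and $\Psi^\Pi_{n-j+1}(\tau)$ come in. The extension map $\Psi^\Pi(\tau)$ is exactly an $(n+1)$-bit word telling us, for each of the $n+1$ insertion positions, whether $\tau\uparrow^i$ is an avoider. If we can compute $\Psi^\Pi(\tau)$ in $O(k)$ time (rather than $O(nk)$), then:
- For counting, $|S_{n+1}(\Pi)| = \sum_{\tau \in S_n(\Pi)} \popcount(\Psi^\Pi(\tau))$, so we just sum popcounts — constant time per avoider for the summation.
- For construction, we iterate through the set bits of $\Psi^\Pi(\tau)$ using $\ctz$, producing one output permutation per set bit, which is why the construction has the extra $+|S_n(\Pi)|$ term (we pay per output, not per input×position).

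**The key recurrence — how to compute $\Psi^\Pi(\tau)$ in $O(k)$.**

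Here's the crux. By Proposition \ref{propfundamental}, $\tau\uparrow^i \in S_{n+1}(\Pi)$ iff $\tau\uparrow^i \notin \Pi$ and all $\min(k+1, n+1)$ of its down-removals $\tau\uparrow^i \downarrow_{j+1}$ lie in $S_n(\Pi)$. The idea must be to build $\Psi^\Pi(\tau)$ as a bitwise-AND of the ignoring extension maps. Specifically:
$$\Psi^\Pi(\tau) = \bigwedge_{j=1}^{\min(k,n)} \Psi^\Pi_{n-j+1}(\tau)$$
(ANDed with a correction for the $\tau\uparrow^i \notin \Pi$ condition and for the $j$ corresponding to removing the newly inserted element itself). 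The $(n-j+1)$-ignoring extension map $\Psi^\Pi_{n-j+1}(\tau)$ tells us, for each insertion position $i$, whether removing the $j$-th-largest letter (other than the new one) from $\tau\uparrow^i$ lands back in $S_n(\Pi)$. Since we only need $\min(k+1, n+1)$ removals to certify avoidance, we only AND together $O(k)$ such maps.

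So the plan reduces to: show each ignoring extension map $\Psi^\Pi_{n-j+1}(\tau)$ can be computed in $O(1)$ time from data we already maintain. The relationship $\tau\uparrow^i \downarrow_{j+1} = (\tau\downarrow_j)\uparrow^{i'}$ for an appropriately shifted position $i'$ should hold: removing a letter and inserting the max commute up to a position shift, and crucially $\tau\downarrow_j$ is a smaller permutation whose own extension map $\Psi^\Pi(\tau\downarrow_j)$ we have already computed and stored. Thus $\Psi^\Pi_{n-j+1}(\tau)$ is just a bit-shifted/reindexed copy of the already-stored $\Psi^\Pi(\tau\downarrow_j)$, obtainable via a few bit operations in $O(1)$.

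**Expected main obstacle and assembly.**

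I would carry out the steps in this order: (1) establish the position-bookkeeping lemma showing $\tau\uparrow^i\downarrow_{j+1}$ equals $(\tau\downarrow_j)\uparrow^{i'}$ with an explicit formula for $i'$ in terms of $\tau^{-1}$, so that the $i$-th bit of $\Psi^\Pi_{n-j+1}(\tau)$ is a reindexing of the $i'$-th bit of the stored $\Psi^\Pi(\tau\downarrow_j)$; (2) verify this reindexing is a single bit-shift-and-mask, hence $O(1)$; (3) combine via the AND recurrence above, handling the two special cases (the removal of the new max element, which always lands in $S_n(\Pi)$ by the downset property and contributes trivially, and the $\tau \notin \Pi$ check which affects at most the positions where $\tau\uparrow^i$ could itself equal a pattern); (4) conclude the counting bound by summing popcounts, and the construction bound by enumerating set bits with $\ctz$.

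The hard part will be step (1): pinning down the exact index arithmetic. The subtlety flagged in the paper — that $\uparrow^i$ and $\downarrow_j$ are \emph{not} inverse operations ($\uparrow$ indexes by position, $\downarrow$ by value) — means the position $i'$ into which the max gets inserted in $\tau\downarrow_j$ depends nontrivially on where the removed value $n-j+1$ sat in $\tau$, i.e., on $\tau^{-1}(n-j+1)$, and on whether the insertion position $i$ falls before or after it. Getting this map correct, and confirming it is a clean monotone reindexing of bit positions so that it reduces to shifting (rather than an arbitrary permutation of bits, which would not be $O(1)$), is the real content of the proof; the rest is bookkeeping and the popcount/ctz accounting.
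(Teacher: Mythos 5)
Your proposal follows the same route as the paper's proof in all essential respects: storing extension maps as machine integers, recovering $\Psi^\Pi(\tau)$ as the AND of the $O(k)$ ignoring extension maps $\Psi^\Pi_j(\tau)$ (the paper's Observation (1)), computing each $\Psi^\Pi_j(\tau)$ in $O(1)$ as a bit-shift/duplicate-one-bit reindexing of the stored map $\Psi^\Pi(\tau\downarrow_{m-j+1})$ at the position $\tau^{-1}(j)$ (Observation (2)), then counting by $\popcount$ and constructing by $\ctz$, with $\Pi$-membership checked separately for permutations of length at most $k$. The index arithmetic you flag as the ``hard part'' is exactly what the paper works out, and it is indeed a clean monotone reindexing.

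However, there is one concrete gap in your time accounting: you treat $\tau^{-1}$ as ``data we already maintain,'' but never say how the inverses of the newly generated avoiders are produced within the budget. After extracting $S_{m+1}(\Pi)$ from the extension maps via $\ctz$ at $O(1)$ per output, each new avoider $\tau\uparrow^i$ still needs (part of) its inverse before the next round, since the reindexing positions are read off from it. Computing these inverses naively costs $O(n)$ per avoider, which would degrade the total to $O(|S_{\le n-1}(\Pi)|\cdot n)$ and destroy the claimed bound; note that the incremental $O(1)$ inverse update from Theorem \ref{thmintermediateavoid} is not available here, because it relies on visiting the insertion positions $i = 1, \ldots, m+1$ consecutively, whereas your $\ctz$ loop deliberately skips the non-avoider positions. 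The paper's fix is to observe that the recurrence only ever consults $\tau^{-1}$ at the largest $k$ values of $\tau$ (only the top $k+1$ letters are ever removed), so it suffices to keep each inverse correct in those entries, which can be updated in $O(k)$ time per avoider at levels up to $n-1$ and not at all at level $n$. Without this (or an equivalent idea), your proof does not reach the stated bound of $O\bigl(|S_{\le n-1}(\Pi)|\cdot k + |S_n(\Pi)|\bigr)$.
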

\begin{proof}
Our computational model allows us to store $O(n)$ bits in an
integer. As a result, we can store extension maps as unsigned
integers, allowing us to perform integer operations on them in
constant time.

Consider a $\Pi$-avoiding permutation $\tau \in S_m(\Pi)$ for some
$m \ge k$. (We will handle smaller $\tau$ later.) By
Proposition \ref{propfundamental}, $$\Psi^\Pi(\tau) = \wedge_{j \in [n
- k, n]}\Psi^\Pi_j(\tau),$$ where $\wedge$ denotes the \emph{and}
operator. (Call this Observation (1).)

Moreover, given $\tau^{-1}$ and $\Psi^\Pi(\tau\downarrow_{m - j +
1})$, we can compute $\Psi^\Pi_j(\tau)$ in constant time. (Call this
Observation (2).) In particular, since $\Psi^\Pi(\tau\downarrow_{m - j
+ 1})$ is the extension map of the standardization of $\tau$ with $j$
removed, and since $\Psi^\Pi_j(\tau)$ is the $j$-ignoring extension
map of $\tau$, we get the following relationship. For $i \in
[1, \tau^{-1}(j)]$, the $i$-th bit of $\Psi^\Pi_j(\tau)$ is the same
as that of $\Psi^\Pi(\tau\downarrow_{m - j + 1})$; and for $i \in
[\tau^{-1}(j) + 1, n + 1]$ the $i$-th bit of $\Psi^\Pi_j(\tau)$ equals
the $(i-1)$-th bit of $\Psi^\Pi(\tau\downarrow_{m - j + 1})$. Thus
$\Psi^\Pi_j(\tau)$ can be obtained from $\Psi^\Pi(\tau\downarrow_{m -
j + 1})$ by shifting bits in positions $\tau^{-1}(j)+1, \ldots, n+1$
to the right by one, and inserting a copy of the $\tau^{-1}(j)$-th bit
in the $(\tau^{-1}(j) + 1)$-th position.

Combining Observations (1) and (2), we can build $\{\Psi^\Pi(\tau)
: \tau \in S_{m}(\Pi)\}$ in time $O(|S_m(\Pi)| \cdot k)$ out of
$\{(\tau, \tau^{-1}) : \tau \in S_{m}(\Pi)\}$ and
$\{\Psi^\Pi(\tau): \tau \in S_{m - 1}(\Pi)$\}. If $m = n - 1$, then at
this point we can use the $\popcount$ instruction to to count the
number of on-bits appearing in extension maps of permutations in
$S_m(\Pi)$. This takes $O(|S_{n - 1}(\Pi)|)$ time and gives us a value
for $|S_n(\Pi)|$. If $m < n - 1$, then we want to build
$\{(\tau, \tau^{-1}) : \tau \in S_{\le m + 1}(\Pi)\}$ and then repeat
the entire process for $m + 1$.

From the extension maps of avoiders in $S_{m}$, we can obtain
$S_{m+1}(\Pi)$ in time $O(|S_{m + 1}(\Pi)|)$ by repeatedly taking
advantage of the $\ctz$ operation in order to extract the $1$-bit
positions from each map. Constructing $\{\tau^{-1}: \tau \in S_{m +
1}(\Pi)\}$ is not as easy however, and would take $O(|S_{m +
1}(\Pi)| \cdot n)$ time to do naively. We are saved, however, by the fact
that we only need each $\tau^{-1}$ to be correct in its largest $k$
values. Thus if we choose to only update these values, then we can
obtain the inverses in time $O(|S_{m + 1}(\Pi)| \cdot k)$.

At this point we have an algorithm which only starts to work once we
have already built the avoiders in $S_k$. In particular, Observation
(1), which states that
$$\Psi^\Pi(\tau) = \wedge_{j \in [n - k, n]}\Psi^\Pi_j(\tau),$$ may not hold
if $|\tau| < k$ (if $\tau\uparrow^i \in \Pi$, then the formula may falsely
identify $\tau\uparrow^i$ as an avoider). This is easily fixed, however,
by simply checking $\Pi$-membership for each detected avoider.
\end{proof}


\section{Counting Pattern Occurrences in $S_{\le n}$}\label{secalg3}

Building on the ideas in Sections \ref{secalg1} and \ref{secalg2}, in this section
we present a dynamic algorithm for counting $\Pi$-hits in each
permutation of $S_n$ in $O(n!k)$ time. Interestingly, when $|\Pi| =
1$, this can be improved to an $O(n!)$ time algorithm. Additionally,
given a preconstructed downset $D \subseteq S_{\le n}$ and the
inverses of each $\tau \in D$, our algorithm extends to run in $O(|S|k)$
time. The inverses for each $\tau \in D$ are required so that
$\tau\downarrow_1, \ldots, \tau\downarrow_{k+1}$ may be computed in $O(k)$
time (using the same technique as in
Theorem \ref{thmintermediateavoid}); recall, however, that they can be
obtained at no additional asymptotic cost if we build $D$ through
repeated applications of the $\uparrow^i$ operation.

For this section, fix $\Pi$ to be a set of patterns, $k = \max_{\pi
  \in \Pi} |\pi|$, and $n \in \mathbb{N}$. For permutations $\tau$, let
$P(\tau)$ denote the number of $\Pi$-hits in $\tau$.

\begin{defn}
  Let $P_i(\tau)$ be the number of $\Pi$-hits in $\tau$ containing the
  entire $i$-upfix of $\tau$. (Recall that the $i$-upfix of $\tau$
  refers to the $i$ largest-valued letters in $\tau$.)
\end{defn}

\begin{ex}
  Suppose $\tau = 1234$ and $\Pi = \{123\}$. Then $P_0(\tau) = 4$, $P_1(\tau) = 3$, $P_2(\tau) = 2$, $P_3(\tau) = 1$, and $P_4(\tau) = 0$.
\end{ex}

Observe that $P_0(\tau) = P(\tau)$. Surprisingly, whereas $P(\tau)$
satisfies no straightforward recurrence relation, $P_i(\tau)$
does. The following proposition can be thought of as a natural
extension of Proposition \ref{propfundamental} from the context of
pattern detection to the context of pattern counting.

\begin{prop}\label{lemPi}
  Let $\tau \in S_n$. Then

  \[ P_i(\tau) = \left\{
  \begin{array}{ll}
   P_{i+1}(\tau) + P_i (\tau \downarrow_{i + 1}) & \mbox{if $i<n$ and $i \le k$,} \\
   1                             & \mbox{if $i=n$ and $\tau\in\Pi$, and} \\
   0                             & \mbox{otherwise.}
  \end{array}
 \right\} \]
\end{prop}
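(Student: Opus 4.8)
The plan is to prove the three cases separately, with essentially all of the work concentrated in the first (recurrence) case; the other two follow almost directly from the definitions. The unifying idea is to classify the $\Pi$-hits counted by $P_i(\tau)$ according to whether or not they use the $(i+1)$-st largest letter of $\tau$, namely the letter $n-i$, which is the letter removed by $\downarrow_{i+1}$ and is \emph{not} part of the $i$-upfix.

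First I would dispose of the degenerate cases. When $i = n$, the $i$-upfix is all of $\tau$, so the only subword of $\tau$ that can contain it is $\tau$ itself; since $\tau \in S_n$ has $\st(\tau) = \tau$, this subword is a $\Pi$-hit exactly when $\tau \in \Pi$, giving $P_n(\tau) = 1$ if $\tau \in \Pi$ and $0$ otherwise. For the remaining ``otherwise'' situations, note that every pattern in $\Pi$ has length at most $k$, so no $\Pi$-hit can contain $i$ distinct letters once $i > k$; hence $P_i(\tau) = 0$ whenever $i > k$. Together with the $i=n$, $\tau \notin \Pi$ subcase, these account for every situation not covered by the first case, and one checks the three conditions are mutually exclusive and exhaustive over $0 \le i \le n$.

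For the recurrence case ($i < n$ and $i \le k$), I would partition the $\Pi$-hits of $\tau$ containing the entire $i$-upfix into those that additionally contain the letter $n-i$ and those that do not. A hit in the first group contains $n, n-1, \ldots, n-i+1$ together with $n-i$, i.e.\ it contains the entire $(i+1)$-upfix, and conversely every hit containing the $(i+1)$-upfix contains the $i$-upfix and the letter $n-i$; thus the first group is counted exactly by $P_{i+1}(\tau)$. For the second group I would exhibit a bijection with the $\Pi$-hits of $\tau \downarrow_{i+1}$ containing its $i$-upfix: since $\tau \downarrow_{i+1}$ is the standardization of $\tau$ with the letter $n-i$ deleted, any subword of $\tau$ avoiding $n-i$ corresponds to the subword of $\tau \downarrow_{i+1}$ occupying the same surviving positions, and vice versa. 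Counting the second group this way yields the term $P_i(\tau \downarrow_{i+1})$, and summing the two disjoint groups gives the claimed identity.

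The main technical point, which I expect to be the crux, is verifying that this correspondence simultaneously preserves two properties. First, deleting one letter and standardizing does not change the standardization of any surviving subword, so a subword is a $\Pi$-hit in $\tau$ if and only if its image is a $\Pi$-hit in $\tau \downarrow_{i+1}$; this is exactly the downset-style observation underlying Proposition \ref{propfundamental}. Second, the letters $n, n-1, \ldots, n-i+1$ are precisely the letters of $\tau$ exceeding $n-i$, so after deleting $n-i$ and standardizing they become the $i$ largest letters of $\tau \downarrow_{i+1}$, i.e.\ its $i$-upfix; hence a hit avoiding $n-i$ contains the full $i$-upfix of $\tau$ if and only if its image contains the full $i$-upfix of $\tau \downarrow_{i+1}$. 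Once both invariances are checked, the bijection is immediate and the recurrence follows.
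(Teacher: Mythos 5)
Your proof is correct and takes essentially the same approach as the paper's: both arguments partition the $\Pi$-hits containing the $i$-upfix according to whether they also use the letter $n-i$ (equivalently, the full $(i+1)$-upfix), counting the first class by $P_{i+1}(\tau)$ and the second by $P_i(\tau \downarrow_{i+1})$, with the degenerate cases handled identically. The only difference is one of detail: you spell out the standardization bijection and verify that it preserves both hit-status and upfix-containment, checks the paper leaves implicit.
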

\begin{proof}
  Suppose $i < n$ and $i \le k$. Then the $\Pi$-hits in $\tau$ using
  $\tau$'s entire $(i+1)$-upfix are counted by $P_{i+1}(\tau)$. And the
  $\Pi$-hits in $\tau$ using $\tau$'s entire $i$-upfix but not $\tau$'s entire
  $(i+1)$-upfix are counted by $P_i(\tau\downarrow_{i+1})$.

  Suppose $i = n$. Then the $i$-upfix of $\tau$ forms a pattern in $\Pi$
  if and only if $\tau \in \Pi$.

  Finally, if $i > k$ or $i > n$ then $P_i(\tau) = 0$. In particular, if $i > k$,
  then no pattern in $\Pi$ can use all of the first $i$ letters of
  $\tau$, since $k = \max_{\pi \in \Pi}|\pi|$.
\end{proof}

Given a permutation $\tau$ and its inverse $\tau^{-1}$, and using the
optimizations introduced in Theorem \ref{thmintermediateavoid},
Proposition
\ref{lemPi} yields an $O(k)$ algorithm to compute each $P_i(\tau)$ for a
permutation in terms of each $P_i(\tau')$ for smaller permutations
$\tau'$ (Algorithm \ref{algcounthits})\footnote{Recall $\tau^{-1}$ is
needed for fast computation of $\tau\downarrow_i$ for
$i \in \{1, \ldots, k+1\}$.}. Note that Algorithm \ref{algcounthits}
treats each $P_i$ as a globally accessible hash table mapping
permutations to integers, and that Algorithm \ref{algcounthits}
assumes access to $\Pi$ and $k$.

\begin{algorithm}
  \KwIn{Permutation $\tau \in S_n$} 
  \KwOut{Assigns values to $P_i(\tau)$ for each $i \in \{0, \ldots, k+1\}$}
  $P_{k+1}(\tau):= 0$\;
  \For{$i \in \{k, \ldots, 0\}$} {
    $P_i(\tau) := 0$\;
    \If{$i = n$ and $\tau \in \Pi$}{$P_i(\tau):= 1$}
    \If{$i < n$}{
      $P_i(\tau):=P_i(\tau\downarrow_{i+1}) + P_{i+1}(\tau)$\;
     }
   }
  \caption{\textbf{Count($\tau$):} Counting $\Pi$-hits in $\tau$.}
  \label{algcounthits}
\end{algorithm}

\begin{thm}\label{thmdownset}
  Given a downset $D \subseteq S_{\le n}$, and the inverse of
  each $d \in D$, one can construct $P(\tau)$ for each $\tau \in D$ in
  $O(|D|\cdot k)$ time.
\end{thm}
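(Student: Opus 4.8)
The plan is to process the permutations of $D$ in order of increasing length and to invoke Algorithm \ref{algcounthits} exactly once on each, reading off the answer $P(\tau) = P_0(\tau)$ once the permutation has been handled. First I would bucket the elements of $D$ by size, which takes $O(|D|)$ time, so that I can iterate over them from shortest to longest, maintaining the hash tables $P_0, \ldots, P_{k+1}$ globally as in Algorithm \ref{algcounthits}. Correctness then follows from Proposition \ref{lemPi} by induction on $|\tau|$: the recurrence for $P_i(\tau)$ refers only to $P_{i+1}(\tau)$, which is computed earlier within the same call since the loop of Algorithm \ref{algcounthits} runs $i$ from $k$ down to $0$ (with $P_{k+1}(\tau)$ initialized to $0$), and to $P_i(\tau\downarrow_{i+1})$, which belongs to a strictly shorter permutation and hence was stored on an earlier iteration.

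The structural point I would verify is that every permutation the recurrence needs is present in $D$ and already computed. Because $D$ is a downset and $\tau\downarrow_{i+1}$ is the standardization of a subword of $\tau$ (namely $\tau$ with its $(|\tau|-i)$-th largest letter deleted), we have $\tau\downarrow_{i+1} \in D$; and since it is one letter shorter than $\tau$, it was processed before $\tau$. Thus each lookup $P_i(\tau\downarrow_{i+1})$ for $i \in \{0, \ldots, k\}$ succeeds in $O(1)$ expected time, and each $P_i(\tau\downarrow_{i+1})$ was itself stored for all $i$ in the range $\{0, \ldots, k+1\}$ that Algorithm \ref{algcounthits} fills in.

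It remains to bound a single call to Algorithm \ref{algcounthits} by $O(k)$. The loop ranges over the $O(k)$ indices $i \in \{0, \ldots, k+1\}$ and does only constant-time arithmetic and one hash lookup per index, so the only real cost is forming the arguments $\tau\downarrow_{i+1}$. Here I would appeal directly to the bookkeeping of Theorem \ref{thmintermediateavoid}: given $\tau^{-1}$ (supplied by hypothesis), one computes $\tau\downarrow_{i+1}$ from $\tau\downarrow_i$ in constant time via the insert/kill bit operations, so producing all of $\tau\downarrow_1, \ldots, \tau\downarrow_{k+1}$ costs $O(k)$ total. Hence each permutation is handled in $O(k)$ time, giving the overall bound $O(|D| \cdot k)$.

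I expect the main obstacle to be precisely this last point — keeping the per-permutation cost at $O(k)$ rather than $O(kn)$ — since recomputing each $\tau\downarrow_{i+1}$ from scratch would reintroduce a linear factor. The supplied inverses exist exactly to eliminate that factor through the incremental $\downarrow$-computation of Theorem \ref{thmintermediateavoid}; the remaining ingredients, namely the downset closure and the intra-call dependence on $P_{i+1}(\tau)$, are routine once the processing order by length is fixed.
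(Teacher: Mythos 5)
Your proposal is correct and matches the paper's proof essentially step for step: bucket-sort $D$ by length, run Algorithm \ref{algcounthits} on each permutation in increasing order of size, and use the supplied inverses together with the incremental bookkeeping of Theorem \ref{thmintermediateavoid} to form each $\tau\downarrow_{i+1}$ in constant time, giving $O(k)$ per permutation. (One cosmetic slip: $\tau\downarrow_{i+1}$ deletes the letter of \emph{value} $|\tau|-i$, i.e.\ the $(i+1)$-th largest letter, not the $(|\tau|-i)$-th largest, but nothing in your argument depends on this.)
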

\begin{proof}
  Given $D$, bucket-sort can be used to construct each of $D \cap
  S_{i}$ for $1 \le i \le n$ in $O(|D|)$ total time. One can then use
  Algorithm \ref{algcounthits} to compute $P(\tau)$ for each $\tau \in
  (D \cap S_i)$ for $i$ from $1$ to $n$ (as well as $P_i(\tau)$ for
  $O(k)$ different $i$). This takes $O(|D| \cdot k)$ time. Note that
  we are assuming each $\tau\downarrow_i$ in the algorithm takes
  constant time to compute; this is easily accomplished using the
  exact same technique as in Theorem \ref{thmintermediateavoid}, and
  is the reason we require the inverse of each $d \in D$.
\end{proof}

The algorithm in Theorem \ref{thmdownset} can also be adapted for
downsets $D \subseteq S_{\le n}$ for which set membership is
conditional on the number of $\Pi$-hits of a permutation.

For one important example of this, suppose $D$ is the set of
permutations in $S_{\le n}$ with $j$ or fewer $\Pi$-hits for some
fixed $j$. Since $D$ is a downset, every element in $S_n \cap D$ is of
the form $\tau\uparrow ^i$ for some $i \in \{1, \ldots n\}$ and
$\tau \in S_{n-1}
\cap D$. Thus we can build $S_n \cap D$ out of $S_{n-1} \cap D$ while
simultaneously using Proposition \ref{lemPi} to compute $P(\tau)$ for
each $\tau \in D$. To accomplish this, we use Algorithm
\ref{algcountlimitedhits} to identify whether a permutation $\tau$ is in
$S_n \cap D$ based on values of $P_i(\tau')$ for $\tau' \in S_{n-1} \cap
D$. At the same time, if Algorithm \ref{algcountlimitedhits} concludes
that a permutation is in $S_n \cap D$, it computes $P_i(\tau)$ for each
$i$.  In turn, Algorithm \ref{algbuildlimitedhits} uses Algorithm
\ref{algcountlimitedhits} to compute each $P_i(\tau)$ for all $\tau \in
D$. Observe that Algorithm \ref{algbuildlimitedhits} runs in $O(|D
\cap S_{\le n-1}|\cdot nk)$ time. In particular, for each
permutation $\tau$ in $D \cap S_{\le n-1}$, we run Algorithm
\ref{algcountlimitedhits} on each $\tau\uparrow^i$.

\begin{algorithm}
  \KwIn{HashTable $H$ such that $H \cap S_{n-1} = D \cap S_{n-1}$, Permutation $\tau \in S_n$, $j$} 
  \KwOut{Returns whether $\tau$ has $\le j$ $\Pi$-hits. If true, assigns values to $P_i(\tau)$ for each $i \in \{0, \ldots, k+1\}$}
  $P_{k+1}(\tau):= 0$\;
  \For{$i \in \{k, \ldots, 0\}$} {
    $P_i(\tau) := 0$\;
    \If{$i = n$ and $\tau \in \Pi$}{
      $P_i(\tau):= 1$\;
    }
    \If{$i < n$}{
      \If{$\tau\downarrow_{i+1} \not\in H$} {
        \For{$r \in \{k+1, \ldots, i+1\}$}{
          $P_{r}.remove(\tau)$\;
        }
        \Return{false\;}
       }
      $P_i(\tau):=P_i(\tau\downarrow_{i+1}) + P_{i+1}(\tau)$
     }
  }
  \If{$P_0(\tau) > j$} {
    \For{$i \in \{k+1, \ldots, 0\}$}{
      $P_i.remove(\tau)$\;
    }
    \Return{false\;}
  }
  \Return{true\;}
\caption{\textbf{CountHitsBounded} Counts $\Pi$-hits in $\tau$ if $\tau$
  has at most $j$ $\Pi$-hits (and is thus said to be in $D$); returns
  false if $\tau$ has more than $j$ $\Pi$-hits.}
\label{algcountlimitedhits}
\end{algorithm}

\begin{algorithm}
  \KwIn{$n$, $j$, $\Pi$}
  \KwOut{Returns set of permutations $\tau$ in $S_{\le n}$ with $\le j$ $\Pi$-hits; Also computes values of $P_i(\tau)$.}
  UnorderedSet D\;
  Queue Unprocessed\;
  \If{$1 \not\in \Pi$ or $j \ge 1$}
     {Unprocessed.enqueue(1)\;
      D.add(1)\;}
  \While{not Unprocessed.isempty()} {
    Perm := Unprocessed.dequeue()\;
    \For{$i \in \{1, \ldots, Perm.size + 1\}$} {
      NewPerm := Perm$\uparrow^i$\;
      \If{CountHitsBounded(D, NewPerm, j)} {
        D.insert(NewPerm)\;
        \If{NewPerm.size() $< n$} {
          Unprocessed.enqueue(NewPerm)\;
         }
      }
    }
  }
  \Return{D\;}
  \caption{\textbf{BuildPermsWithBoundedHits}}
  \label{algbuildlimitedhits}
\end{algorithm}

When $j = 0$, Algorithm \ref{algbuildlimitedhits} simply builds
$S_{\le n}(\Pi)$. In fact, in this case the algorithm can be cleaned
up to become Algorithm \ref{algbuildavoiders}.

Theorem \ref{thmdownset} allows us to count $\Pi$-hits in each $\tau \in
S_n$ in $O(n!k)$ time. Surprisingly, this can be improved even further
when $|\Pi| = 1$.
\begin{thm}\label{thmfactorial}
  Let $\pi \in S_k$. Then the number of $\pi$-hits in each $\tau \in S_n$
  can be computed in $\Theta(n!)$ time, regardless of $k$.
\end{thm}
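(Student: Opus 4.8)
The plan is to exploit a special feature of the single-pattern case: for $\pi \in S_k$, almost all of the quantities $P_i(\tau)$ from Proposition \ref{lemPi} are forced to be $0$, so that although the general algorithm of Theorem \ref{thmdownset} spends $\Theta(k)$ per permutation computing $k+2$ values, here only a constant number need ever be computed on average. For $1 \le i \le k$, let $\rho_i := \st$ of the subword of $\pi$ formed by its $i$ largest letters (its $i$-upfix). If a $\pi$-hit of $\tau$ uses the entire $i$-upfix of $\tau$, those $i$ letters are the largest in the hit and so must play the role of $\pi$'s $i$ largest letters; hence $P_i(\tau) = 0$ unless the $i$-upfix of $\tau$ is order-isomorphic to $\rho_i$. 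Moreover this condition is nested: since the top $j$ letters of $\rho_i$ form $\rho_j$, if the $i$-upfix of $\tau$ matches $\rho_i$ then every smaller upfix matches as well. Thus there is a threshold $t(\tau) \le \min(k,n)$ — the largest $i$ for which the $i$-upfix of $\tau$ is order-isomorphic to $\rho_i$ — and $P_i(\tau) = 0$ for all $i > t(\tau)$.

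First I would argue that the truncated recurrence closes. For $i \le t(\tau)$, the permutation $\tau\downarrow_{i+1}$ has the same $i$-upfix (up to order-isomorphism) as $\tau$, because deleting the $(i+1)$-st largest letter does not disturb the top $i$ letters; hence $t(\tau\downarrow_{i+1}) \ge i$ and the value $P_i(\tau\downarrow_{i+1})$ required by Proposition \ref{lemPi} is among the values already stored for $\tau\downarrow_{i+1}$. The modified algorithm therefore stores, for each $\tau$, only the potentially nonzero values $P_0(\tau), \ldots, P_{t(\tau)}(\tau)$. To process $\tau$ I would first determine $t(\tau)$ and then run the recurrence of Proposition \ref{lemPi} from $i = t(\tau)$ down to $i = 0$, starting from the base value $P_{t(\tau)}(\tau) = P_{t(\tau)}(\tau\downarrow_{t(\tau)+1})$ (or $P_{t(\tau)}(\tau)=1$ in the degenerate case $t(\tau) = n$ with $\tau = \pi$), every lookup being legal by the preceding observation.

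The heart of the analysis is bounding the total number of $P_i$-values computed, namely $\sum_{\tau \in S_{\le n}}(t(\tau)+1)$. Here I would use the fact that, for a uniformly random $\tau \in S_m$, the relative order of its $i$ largest-valued letters is a uniformly random element of $S_i$; consequently $\Pr[\,t(\tau) \ge i\,] = \Pr[\text{$i$-upfix of $\tau$ is order-isomorphic to }\rho_i] = 1/i!$ for each $1 \le i \le \min(k,m)$. Summing, $\E_{\tau \in S_m}[t(\tau)] = \sum_{i\ge 1} 1/i! < e-1$, a constant independent of $k$, so $\sum_{\tau \in S_m}(t(\tau)+1) = O(m!)$ and hence $\sum_{\tau \in S_{\le n}}(t(\tau)+1) = O(n!)$. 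The same moment computation gives $\E_{\tau\in S_m}[t(\tau)^2] = \sum_{i\ge1}(2i-1)/i! = O(1)$, which is what makes it affordable to spend up to $O(t(\tau))$ time per computed level — for instance to find by a naive scan the rank of $\tau$'s $i$-th largest letter among the larger ones when extending the upfix match — since $\sum_\tau (t(\tau)^2 + 1) = O(n!)$ as well. Together with the trivial lower bound $\Omega(n!)$ (there are $n!$ answers to produce), this yields the claimed $\Theta(n!)$.

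The step I expect to be the main obstacle is the low-level bookkeeping needed to keep every per-level operation inside this amortized budget, and in particular the handling of the inverse permutations. Computing each $\tau\downarrow_{i+1}$ in constant time (as in Theorem \ref{thmintermediateavoid}) and checking the upfix match both require the positions of the top few letters of $\tau$, i.e.\ the top values of $\tau^{-1}$; maintaining all top-$k$ inverse values for every permutation would reintroduce the $O(n!\,k)$ cost we are trying to avoid. The resolution I would pursue is to store, for each $\tau$, only the top values of $\tau^{-1}$ that are actually used — a number that is $O(t(\tau))$ for $\tau$ itself but may be larger when a child $\tau\uparrow^i$ has a deeper threshold — and to extend an inverse one level deeper on demand from that of its $\uparrow$-parent. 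The delicate point is showing that these lazy extensions do not cascade expensively; the accounting again reduces to $\sum_{\sigma}(t(\sigma)+O(1))$, since the depth to which $\tau$'s inverse is ever needed is dictated by the thresholds of $\tau$ and of its children $\tau\uparrow^i$, and $\sum_{\sigma \in S_{\le n}}(t(\sigma)+1) = O(n!)$ bounds the whole ledger. Verifying this charging carefully — rather than the conceptual recurrence, which is immediate from Proposition \ref{lemPi} — is where the real work lies.
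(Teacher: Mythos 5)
Your proposal is correct and is essentially the paper's own proof: both arguments truncate the recurrence of Proposition \ref{lemPi} at the first $i$ for which the $i$-upfix of $\tau$ fails to be order-isomorphic to $\pi$'s $i$-upfix, both justify the truncation by the same consistency observation (values beyond the threshold are never requested, since $\tau\downarrow_{i+1}$ inherits the $i$-upfix of $\tau$), and both bound the expected threshold over each $S_m$ by $\sum_{i} 1/i! = O(1)$. The one point where you diverge — the worry that maintaining inverse information reintroduces an $O(n!\,k)$ cost, which you handle with lazy extensions and a second-moment bound — is a non-issue in the paper's model: a permutation occupies $O(1)$ machine words, so $(\tau\uparrow^{i+1})^{-1}$ is obtained from $(\tau\uparrow^{i})^{-1}$ in constant time (as in Theorem \ref{thmintermediateavoid}), and full inverses come for free when $S_{\le n}$ is generated by insertions.
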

\begin{proof}
  For a permutation $\tau$, let $i$ be the smallest $i$ such that the
  $i$-upfix of $\tau$ is not order-isomorphic to the $i$-upfix of
  $\pi$. Then $P_i(\tau) = 0$. Thus we can modify Algorithm
  \ref{algcounthits} to not bother computing $P_j(\tau)$ for $j > i$. In
  particular, $P_j(\tau)$ for $j > i$ will never be requested later in
  the algorithm; any $\tau'$ such that $\tau'\downarrow_{k} = \tau$ for some $k
  > i$ will also have its $i$-upfix not order-isomorphic to $\pi$'s.

  Note that given that the $(i-1)$-upfix of $\tau$ is order-isomorphic
  to the $(i-1)$-upfix of $\pi$, and using information about
  $\tau^{-1}$, one can check whether the $i$-upfix is as well in
  constant time. With this in mind, we can analyze our new algorithm.

  Let $T_r$ be the indicator function taking value 1 when the
  $r$-upfix of a permutation is order-isomorphic to $\pi$'s
  $r$-upfix. Then the new algorithm spends time proportional to $O(1)
  + \sum_r T_r(\tau)$ on each permutation $\tau$. However, $\E(T_r(\tau)) \le
  1/r!$ over all $\tau \in S_{\le n}$. Thus the algorithm runs in $O(n!)$
  time.
\end{proof}

In fact, we conjecture that the same trick reduces Algorithm
\ref{algavoidance} to an $O(|S_{n-1}(\pi)|n)$ time algorithm for any
pattern $\pi$. This would not necessarily reduce the runtime of the
enumeration algorithm in Theorem \ref{thmusavoid2} to
$O(|S_{n-1}(\pi)|)$, however, since the algorithm would still be
asymptotically bottle-necked by the updating of inverses. Regardless,
the hack can be added to both algorithms to reduce cache misses.

To prove the conjecture, one would show that $\E(T_r(\tau))$ is small for
$\tau$ from $S_n(\pi)\uparrow := \{\tau \uparrow ^i \mid i \in \{1,\ldots, n\},
\tau \in S_{n-1}(\pi)\}$. For example, when $\pi = 123\cdots k$, this can
be done as follows.  Suppose $\tau \in S_n(\pi)\uparrow$ has an
increasing $r$-upfix $p_1p_2\cdots p_r$. Since $\tau$'s $r$-upfix is
in increasing order and $\tau$ avoids $12\cdots k$, it follows that
any reordering of the letters in $\tau$'s $r$-upfix will also result
in a permutation avoiding $12\cdots k$. If we reorder the letters in
the $r$-upfix to be $p_2p_3p_4 \cdots p_i$ with $p_1$ inserted in some
position other than the first, then we see that we can match each
$\tau \in S_n(\pi)\uparrow$ having an increasing $r$-upfix with $r-1$
permutations, each in $S_n(\pi)\uparrow$ and each with an increasing
$(r-1)$-upfix (but not an increasing $r$-upfix). It follows that
$\E(T_r(\tau)) \le \E(T_{r-1}(\tau))/r$ over $\tau \in
S_n(\pi)\uparrow$, implying the conjecture for $\pi = 123\cdots k$. In
fact, proving $\E(T_r(\tau)) \le \E(T_{r-1}(\tau))/c$ for any constant
$c > 1$ would be sufficient, which is why the conjecture seems very
likely to be true in general.

\begin{rem}\label{rempopcount}
  In practice, the technique introduced in Theorem \ref{thmfactorial}
  is worth implementing even for large sets of patterns $\Pi$ (for
  both PPA and PPC). In order for this to be efficient, however, one
  needs to quickly identify whether the $i$-upfix of a permutation
  $\tau$ is an $i$-upfix of \emph{any} permutation $\pi \in \Pi$. An
  efficient technique for this, taking constant time per $i$-upfix, is
  discussed in Appendix A.
\end{rem}

\section{Eliminating the memory bottleneck and making parallelism easy}\label{secmem}

So far, our algorithms have required space nearly proportional to
their runtime. In this section we restructure our algorithms so that,
without changing their runtimes, we asymptotically reduce space usage
to at most $O(n^{k + 1}k)$. Consequently, our algorithms are practical
for even very large computations on small computers. At the same time,
these changes make our algorithms easily implemented in parallel.

Of course, if one wants to actually store $S_n(\Pi)$ or $P(\tau)$ for
each $\tau \in S_n$, then space efficiency is futile. However, in this
section, we assume that the goal is \emph{enumeration}, to either evaluate
 $|S_n(\Pi)|$ or to tally how many $\tau \in S_n$ have each value of
$P(\tau)$.

For this entire section, define $T$ to be the inclusion tree of all
permutations, meaning that a node $v$ has children $v \uparrow^i$ for
each $i \in [1, |v| + 1]$. For a node $v$, define $v$'s \emph{$j$-th
level children} $C^j(v)$ to be the set of nodes in the $(j + 1)$-th
level of the subtree of which $v$ is the root. In particular, these
are the permutations in $S_{|v| + j}$ whose smallest $|v|$ letters are
order-isomorphic to $v$. 

The following lemma will play a key role in improving memory
utilization. In particular, the recursions on which both our PPA and
PPC algorithms are based compute information about a given $\tau \in
S_n$ based only on information about $\tau\downarrow_i$ for each
$i \in \min (n, k + 1)$. Lemma \ref{lemmem} tells us that we can
therefore compute information about the $(k+1)$-th level children of
$v$ based only on information about the $k$-th level children of $v$.

\begin{lem}\label{lemmem}
  For a given a set of permutations $A$, define $A\downarrow_i$ as $\{s
\downarrow_i : s \in A\}$. Then for any node $v \in T$ and for any positive integers $i$ and $j$ satisfying $i \le j$, 
  $$C^{j}(v)\downarrow_i \subseteq C^{j- 1}(v).$$
\end{lem}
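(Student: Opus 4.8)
The plan is to argue directly from the characterization of $C^j(v)$ recalled just above the lemma: a permutation $w$ lies in $C^j(v)$ precisely when $|w| = |v| + j$ and the standardization of the $|v|$ smallest-valued letters of $w$ is equal to $v$. So I would fix an arbitrary $w \in C^j(v)$, write $n = |v| + j$, fix $i$ with $1 \le i \le j$, and show $w\downarrow_i \in C^{j-1}(v)$; since $w$ is arbitrary this gives the claimed inclusion.

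First I would pin down exactly which letter $\downarrow_i$ deletes. By the definition of $\downarrow_i$, the permutation $w\downarrow_i$ is the standardization of $w$ with the letter of \emph{value} $n - i + 1$ removed. Because $i \le j$, we have $n - i + 1 \ge n - j + 1 = |v| + 1$, so the deleted letter is one of the $j$ largest letters of $w$; in particular its value strictly exceeds each of the $|v|$ smallest letters, whose values are exactly $1, \ldots, |v|$.

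The key step is then to observe that deletion-and-standardization leaves the bottom $|v|$ letters order-isomorphically intact. Standardizing a word after removing a single letter of value $c$ fixes every surviving letter of value less than $c$ and decreases every surviving letter of value greater than $c$ by one. Here $c = n - i + 1 > |v|$, so each of the values $1, \ldots, |v|$ survives, keeps its value, and keeps its left-to-right order within the word. Hence the $|v|$ smallest letters of $w\downarrow_i$ appear in the same relative order as the $|v|$ smallest letters of $w$ and still standardize to $v$. Since $|w\downarrow_i| = n - 1 = |v| + (j - 1)$, the characterization of $C^{j-1}(v)$ gives $w\downarrow_i \in C^{j-1}(v)$, as desired.

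The only place requiring care — the main, and rather mild, obstacle — is the bookkeeping in the first two steps: one must track that $\downarrow_i$ removes the $i$-th largest \emph{value} $n - i + 1$ rather than the $i$-th position, and that the hypothesis $i \le j$ is exactly what forces this value above the floor $\{1, \ldots, |v|\}$ that encodes $v$. Once that is checked, the preservation of the bottom $|v|$ letters follows immediately from the definition of standardization, and no further computation is needed.
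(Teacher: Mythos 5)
Your proof is correct and follows essentially the same route as the paper's: both use the characterization of $C^{j}(v)$ as the permutations of length $|v|+j$ whose smallest $|v|$ letters standardize to $v$, and both hinge on the inequality $|v| + j - i + 1 > |v|$ (from $i \le j$) to conclude that the deleted letter lies above the bottom $|v|$ letters, so deletion and standardization leave the $|v|$-downfix intact. Your write-up just makes the standardization bookkeeping more explicit than the paper does (and, incidentally, avoids the paper's slip of calling the preserved block an ``upfix'').
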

\begin{proof}
  The elements of $C^{j-1}(v)$ are precisely the permutations in
  $S_{|v| + j - 1}$ with $v$ as their $|v|$-downfix (i.e., the word
  formed by the letters $1, \ldots, |v|$ is order-isomorphic to
  $v$). Every element of $C^{j}(v)$ also has $|v|$-downfix
  $v$. Moreover, $|v| + j - i + 1 > |v|$ since $i \le j$, implying
  that every element of $C^j(v)\downarrow_i$ has $|v|$-upfix $v$ as
  well, completing the proof.
\end{proof}

Our approach to PPC in Section \ref{secalg3} uses $O((n-1)!k)$
space. In particular, we perform a breadth-first traversal of $T \cap
S_{\le n}$, using Proposition \ref{lemPi} at each node $v$ to compute
each $P_i(v)$. However, Lemma \ref{lemmem} suggests an
$O(n^{k+1}k)$-space approach.

\begin{thm}
Let $\Pi$ be a set of patterns, the longest of which is length $k$. We
can count permutations in $S_n$ by $\Pi$-hits in $O(n!k)$ time using
$O(n^{k+1}k)$ space.
\end{thm}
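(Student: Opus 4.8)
The plan is to replace the breadth-first traversal underlying Theorem \ref{thmdownset} by a depth-first traversal of the inclusion tree $T$, in which we never hold an entire level of $S_{\le n}$ in memory but instead keep, for each node $u$ on the current root-to-node path, only the set $C^k(u)$ of its $k$-th level children together with the values $P_0(\tau), \ldots, P_k(\tau)$ on each $\tau \in C^k(u)$. Since $C^k(u) \subseteq S_{|u|+k}$, once the recursion descends to a node $u$ with $|u| = n-k$ the set $C^k(u)$ is exactly the collection of permutations in $S_n$ whose $(n-k)$-downfix is $u$, and at that point I would tally the values $P_0 = P$ over $C^k(u)$. As $u$ ranges over $S_{n-k}$ these sets partition $S_n$, so each $\tau \in S_n$ is counted exactly once.

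The engine of the recursion is the step that produces, from $C^k(u)$, the set $C^k(w)$ for each child $w = u\uparrow^p$ of $u$; note that $\bigsqcup_{w} C^k(w) = C^{k+1}(u)$, so this is precisely ``computing the $(k+1)$-th level children of $u$ from its $k$-th level children.'' Concretely I would enumerate the permutations of $C^k(w)$ by the depth-$k$ expansion of $w$ under $\uparrow$, carrying each permutation together with its inverse as in Theorem \ref{thmintermediateavoid}, and for each such $\tau$ apply the recurrence of Proposition \ref{lemPi} (Algorithm \ref{algcounthits}) to obtain $P_0(\tau), \ldots, P_k(\tau)$. The crucial point is that this recurrence queries only $P_i(\tau\downarrow_{i+1})$ for $i+1 \le k+1$, and since $\tau \in C^k(w) \subseteq C^{k+1}(u)$, Lemma \ref{lemmem} (with $j = k+1$ and $i+1 \le j$) guarantees $\tau\downarrow_{i+1} \in C^k(u)$, exactly the set we already have stored. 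Thus a child's data is computable from its parent's data alone, each $P$-evaluation costing $O(k)$ time via the integer representation of Theorem \ref{thmintermediateavoid}. To bootstrap, I would compute $C^k$ of the root (the empty permutation), i.e. all of $S_k$ with its $P$-values, by the ordinary small breadth-first computation over $S_1, \ldots, S_k$.

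For the running time, each permutation $\sigma$ of size $m$ with $k \le m \le n$ is generated exactly once, as the element of $C^k(w)$ with $w$ its size-$(m-k)$ downfix, and is processed in $O(k)$ time; summing $|S_k| + \cdots + |S_n| = O(n!)$ yields $O(n!k)$ total, matching Theorem \ref{thmdownset}. For the space, at any instant the live data is $C^k(v_0), \ldots, C^k(v_t)$ for the nodes $v_0, \ldots, v_t$ (with $|v_s| = s$) on the current path, where $t \le n-k$, plus the single child set under construction. Since $|C^k(v_s)| = (s+1)(s+2)\cdots(s+k) \le n^k$ and there are at most $n-k+1 \le n$ such terms, the total number of stored permutations is at most $n^{k+1}$, and each carries $O(k)$ values, giving the claimed $O(n^{k+1}k)$ bound.

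The step I expect to be the crux is the justification that assembling the data for $C^k(w)$ requires \emph{only} the parent set $C^k(u)$ and nothing from elsewhere in the tree; this is exactly where Lemma \ref{lemmem} enters, and it is what legitimizes depth-first, path-local storage despite the fact that the recurrence of Proposition \ref{lemPi} removes non-maximal letters via $\downarrow_{i+1}$. The remaining care is purely the space accounting above, namely verifying that the products $(s+1)\cdots(s+k)$ summed along a single root-to-leaf path stay $O(n^{k+1})$ rather than growing by an extra factor of $n$.
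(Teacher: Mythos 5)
Your proposal is correct and follows essentially the same route as the paper's proof: a depth-first traversal of $T \cap S_{\le n-k}$ keeping only the sets $C^k(\cdot)$ with their $P_i$-tuples along the current root-to-node path, justified by Lemma \ref{lemmem}, bootstrapped by running Algorithm \ref{algcounthits} on $S_{\le k}$, with the identical space accounting $\sum_{s}(s+1)(s+2)\cdots(s+k) \le n^{k+1}$ tuples of size $O(k)$. The only cosmetic differences are that you attribute the work of computing $C^k(w)$ to the parent rather than to $w$ itself, and you tally $P_0$ only at depth-$(n-k)$ nodes rather than at every node, neither of which changes the argument.
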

\begin{proof}
If $n < k$, then Algorithm \ref{algcounthits} is already
sufficient. Otherwise, we restructure the algorithm as follows.

First compute $P_i(\tau)$ for each $\tau \in S_{\le k}$ using the
Algorithm \ref{algcounthits} (in $O(k!)$ space). Then traverse $T
\cap S_{\le n - k}$ depth-first. When visiting a node $v$, compute
each $P_i(c)$ for each $c \in C^k(v)$. Observe that by
Lemma \ref{lemmem} this computation depends only on elements of
$C^{k}(v\downarrow_1)$, which we will have already computed due to the
depth-first nature of our computation\footnote{Note that as a base
case we consider $C^k$ of the empty permutation to be the permutations
of size $k$.}. Having computed each $P_i$ of each $c \in C^k(v)$, we
update our tally of how many permutations have each number of
$\Pi$-hits, and we then store each $P_i(c)$ (to be accessed while
visiting $v$'s children in $T$). However, when we return to $v$'s
parents during our depth-first traversal of $T$, we no longer need to
store these $P_i(c)$ values and we throw them out.

At a given point in the traversal of $T \cap S_{n - k}$, we may store
as many $(P_0, P_1, \ldots, P_k)$-tuples as $O(\sum_{i = 0}^{n - k}
(i+1)(i+2)\cdots(i+k))$, bounding our memory-usage at $O(n^{k + 1}k)$;
note that the space needed to keep tally of permutations by $\Pi$-hits
is bounded above by ${n \choose k} + {n \choose {k - 1}} + \cdots + {n
\choose 1} \le n^{k + 1}$.
\end{proof}

\begin{rem}
When $|\Pi| = 1$, we can use the technique from Theorem
\ref{thmfactorial} to obtain $O(n^{k+1})$-space usage, since instead
of storing entire $(k+1)$-tuples of $P_i$'s, we store on average a
constant number per permutation. Additionally, the technique brings
the runtime down to $O(n!)$.
\end{rem}

We can apply a similar optimization to the PPA algorithm introduced in
Theorem \ref{thmusavoid2}.

\begin{thm}
Let $\Pi$ be a set of patterns, the longest of which is length $k$. The values
$|S_1(\Pi)|, \ldots, |S_{n}(\Pi)|$ can be computed in time $O(|(S_{\le
n-1}(\Pi)| \cdot k)$ and space $O(n^k)$.
\end{thm}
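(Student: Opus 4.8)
The final statement is a space-optimized version of the PPA enumeration result from Theorem \ref{thmusavoid2}. That earlier theorem already established the $O(|S_{\le n-1}(\Pi)|\cdot k)$ time bound for computing $|S_1(\Pi)|, \ldots, |S_n(\Pi)|$; the new content is purely the $O(n^k)$ space bound, achieved without changing the runtime. So my proof proposal should focus on restructuring the traversal to reduce memory, exactly paralleling the preceding PPC space theorem.

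**The approach.** The plan is to mirror the depth-first restructuring used in the PPC space theorem, but now tracking extension maps $\Psi^\Pi(\cdot)$ rather than the $(P_0, \ldots, P_k)$ tuples. Recall from Theorem \ref{thmusavoid2} that the extension map $\Psi^\Pi(\tau)$ of a permutation $\tau \in S_m(\Pi)$ is computed via Observations (1) and (2) from $\tau^{-1}$ and the maps $\Psi^\Pi(\tau\downarrow_{m-j+1})$ for $j \in [m-k, m]$ — that is, from the extension maps of the permutations obtained by deleting one of the $k+1$ largest letters. First I would handle the base case: compute $\Psi^\Pi(\tau)$ for all $\tau \in S_{\le k}$ directly using the method of Theorem \ref{thmusavoid2}, together with membership checks against $\Pi$, in $O(k!)$ space. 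Then I would traverse the inclusion tree $T \cap S_{\le n-k}$ depth-first, and at each visited node $v$ compute $\Psi^\Pi(c)$ and the inverse-data for each $c \in C^k(v)$.

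**Using Lemma \ref{lemmem}.** The key structural point is that, by Observation (2), computing $\Psi^\Pi(c)$ for $c \in C^k(v)$ requires only the extension maps of the permutations $c\downarrow_i$ for $i \in \{1, \ldots, k+1\}$. By Lemma \ref{lemmem} with $j = k$, these deletions satisfy $C^k(v)\downarrow_i \subseteq C^{k-1}(v)$ for each $i \le k$; the one remaining case $i = k+1$ deletes the very largest letter, which strips $c$ back down to $v$'s level-$(k-1)$ children of $v\downarrow_1$ rather than of $v$. So the data needed to process $C^k(v)$ lives in $C^k(v\downarrow_1)$ — precisely the children-set of $v$'s parent, which the depth-first order guarantees has already been computed and is still in memory (with the empty permutation's $C^k$ taken to be $S_k$ as a base case). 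After computing each $\Psi^\Pi(c)$, I would immediately \popcount\ it to increment the running tally for $|S_{|v|+k+1}(\Pi)|$, then retain these maps only until the depth-first traversal backtracks past $v$, at which point they are discarded. This is the same retention discipline as the PPC proof.

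**Accounting and the main obstacle.** For the space bound: along any root-to-current path in $T \cap S_{\le n-k}$, the stored data is the union of the $C^k$-sets hanging off the path's nodes, giving $O\!\left(\sum_{i=0}^{n-k} (i+1)(i+2)\cdots(i+k)\right) = O(n^{k+1})$ extension maps; but since we need not store the $(k+1)$-tuples of $P_i$ values here — only a single extension map per permutation, each occupying $O(n)$ bits, i.e.\ $O(1)$ machine words in our model — the total space is $O(n^k)$ rather than $O(n^{k+1}k)$, matching the claimed bound. The runtime is unchanged because we perform the identical $O(k)$-per-avoider extension-map computations as in Theorem \ref{thmusavoid2}, just reordered by a depth-first rather than breadth-first schedule. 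The main obstacle I anticipate is the bookkeeping for the inverse data: Theorem \ref{thmusavoid2} requires $\tau^{-1}$ (correct in its top $k$ values) to apply Observation (2), and I must verify that the $\downarrow$-restricted inverses needed to build $C^k(v)$ from $C^k(v\downarrow_1)$ are likewise available within the retained per-node data and can be updated in $O(k)$ time per permutation, so that neither the time nor the space budget is violated. The extension-map recurrence itself carries over verbatim; the care is entirely in confirming that the depth-first retention window keeps exactly the sets $C^k$ of the path's nodes live and nothing more.
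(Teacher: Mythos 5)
Your proposal follows the right template (DFS over the inclusion tree, retaining only the path's children-sets, as in the PPC space theorem), but it has a genuine gap in the space accounting. You store the extension maps of $C^k(v)$ for every node $v$ on the current root-to-node path, and you yourself compute that this is $O\!\left(\sum_{i=0}^{n-k}(i+1)(i+2)\cdots(i+k)\right) = O(n^{k+1})$ maps. Passing from $(k+1)$-tuples of $P_i$-values to a single extension map per permutation saves a factor of $k$, not a factor of $n$: since each map is $O(1)$ machine words, your scheme uses $\Theta(n^{k+1})$ words in the worst case, which overshoots the claimed $O(n^k)$ bound by a factor of $n$. The arithmetic step ``$O(n^{k+1})$ maps, hence $O(n^k)$ space'' is simply invalid.

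The missing idea, which is exactly how the paper gets the extra factor of $n$, is that the extension map of an avoider $\tau$ depends only on the extension maps of $\tau\downarrow_1,\ldots,\tau\downarrow_k$ --- $k$ deletions, not $k+1$. The $(k+1)$-st check demanded by Proposition \ref{propfundamental} for $\tau\uparrow^i$ is the removal of the newly inserted largest letter, which returns $\tau$ itself, already known to lie in $S(\Pi)$; so that check is free. Consequently one can run the depth-first traversal over the \emph{level-$(k-1)$} children: build the extension maps of $C^{k-1}(v)\cap S_{\le n}(\Pi)$ from those of $C^{k-1}(v\downarrow_1)\cap S_{\le n}(\Pi)$ (via Lemma \ref{lemmem}, since $C^{k-1}(v)\downarrow_i\subseteq C^{k-2}(v)\subseteq C^{k-1}(v\downarrow_1)$ for $i\le k-1$, and $C^{k-1}(v)\downarrow_k\subseteq C^{k-1}(v\downarrow_1)$), storing at most $n^{k-1}$ maps per path node and hence $O(n^k)$ words overall. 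Separately, the runtime issue you flag but leave unresolved --- locating, inside the parent's set, the maps relevant to $v$ in $O(k)$ time per permutation --- is handled in the paper by partitioning $C^{k-1}(v\downarrow_1)\cap S_{\le n}(\Pi)$ according to the position of $|v|$ relative to $1,\ldots,|v|-1$ (available from the stored top-$k$ inverse data); without this or an equivalent device, the claimed time bound is also not established.
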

\begin{proof}
If $n \le k - 1$, then Theorem \ref{thmusavoid2} is already
sufficient, requiring space no more than $n! \le n^k$. Otherwise, we
restructure the theorem's algorithm as follows.

As a base case, use the algorithm from Theorem \ref{thmusavoid2} to
count $\Pi$-avoiders in $S_{\le k - 1}(\Pi)$ and build the extension map
for each avoider in $S_{k- 1}(\Pi)$.

Computing the extension map for an avoider $v$ uses only the extension
maps of $v\downarrow_1, \ldots, v\downarrow_k$. Therefore, by
Lemma \ref{lemmem}, to build the extension maps for each permutation
in $C^{k-1}(v) \cap S_{\le n}(\Pi)$, it suffices to have stored the
extension maps for each permutation in $C^{k-1}(v \downarrow_1) \cap
S_{\le n}(\Pi)$. Thus after we have built the extension maps for $S_{k
- 1}(\Pi)$ as a base case, we can restructure the algorithm from
Theorem \ref{thmusavoid2} as follows. We perform a depth-first
traversal on $T \cap S_{n - k}$. When visiting a node $v$, build the
extension map of each permutation in $C^{k-1}(v) \cap S_{\le
n}(\Pi)$. Store these extension maps to be accessed later in the
depth-first traversal; upon returning to $v$'s parents, however, we
throw these extension maps away.

In order for this to not compromise the algorithm's runtime, there is
a slight subtlety. We need to be able to build $C^{k -1 }(v) \cap
S_{\le n}(\Pi)$ out of $C^{k - 1}(v\downarrow_1) \cap S_{\le n}(\Pi)$
in time $O(|C^{k - 1}(v \downarrow_1) \cap S_{\le n}(\Pi)| \cdot k +
|C^{k - 1}(v) \cap S_{\le n}(\Pi)|)$. To accomplish this, when
visiting $v\downarrow_1$ in the depth-first traversal, one partitions
$C^{k-1}(v \downarrow_1) \cap S_{\le n}(\Pi)$ according to the
position of $|v|$ relative to $1, 2, \ldots, |v| - 1$\footnote{Since
the algorithm in Theorem \ref{thmusavoid2} remembers the positions of
the final $k$ letters of the inverses of the permutations in
$C^{k-1}(v \downarrow_1) \cap S_{\le n}(\Pi)$, we can build this
partition in time $O(k|C^{k-1}(v \downarrow_1) \cap S_{\le
n}(\Pi)|)$.}. Then, when visiting $v$, one can build $C^{k-1}(v) \cap
S_{\le n}(\Pi)$ from the extension maps of elements of
$C^{k-1}(v \downarrow_1) \cap S_{\le n}(\Pi)$ having $|v|$ in the same
position relative to $1, 2, \ldots, |v|-1$ as in $v$\footnote{Recall
that the $\ctz$ operator can be used to quickly determine for which
positions an extension map takes value 1}. This resolves the issue,
allowing us to retain our original runtime.

At any given moment in the algorithm, at each depth of the
depth-first traversal, we store no more than $n^{k - 1}$ extension
maps. Thus the algorithm uses $O(n^k)$ space.
\end{proof}

In practice, if $C^k(v) \cap S_{\le n}(\Pi)$ is never very large for
any $v$, then the space usage for PPA may be much smaller than
$O(n^k)$. In particular, the expected memory consumption at a given
instance in the algorithm is $O(\sum_{j = k}^{n - 1}|S_j(\Pi)| / |S_{j
- k + 1}(\Pi)|)$, which by the former Stanley-Wilf Conjecture (proven
in \cite{marcus04}), grows at most linearly with $n$ (with a
potentially large constant depending on $\Pi$). Thus, a large machine
running many pattern-avoidance computations in parallel can treat
space usage as growing linearly with $n$.

In addition to reducing space-usage asymptotically, the optimizations
in this section make parallelizing our algorithm easy. Indeed, by
visiting multiple of a node's children at a time, the depth-first
traversals of $T$ can be parallelized without risking write-conflicts
for hash maps containing either extension maps or
$P_i$-values. Although we test our algorithms in serial in
Section \ref{sectest}, we have released a parallelized implementation
at \github, 
which we used for our computations in
Section \ref{secconjectures}.

\section{Implementations and Performance Comparisons}\label{sectest}
 
In this section, we test our algorithms' performance against other
algorithms\footnote{All of our experiments are run in serial on an
  Amazon C4.8xlarge machine with two Intel E5-2666 v3 chips running at
  2.90GHz; we are running Fedora 22 with kernel 4.0.4-301; we compile
  using g++ 5.1.1.}.
Our implementations represent the $i$-th letter of a permutation in
the $i$-th nibble of a 64-bit integer, allowing for permutations of
size up to 16. However, in our released code (\github), one can choose
the settings-option of allowing for larger permutations.

In Section \ref{subsectestcount}, we test our algorithm for counting
$\Pi$-hits in $S_n$ against the generate-and-check algorithm.

In Section \ref{subsectestav}, we test our algorithm for finding
$|S_n(\Pi)|$ against the naive generate-and-check algorithm and
PermLab's more sophisticated generate-and-check algorithm. Along the
way, we re-implement PermLab's algorithm, introducing optimizations
resulting from our 64-bit representation of a permutation, and
increasing efficiency for large sets of patterns. The difference in
performance between PermLab's and our algorithm is most clear for
large sets of large patterns; this is important because for large
patterns, $S_n(\Pi)$ likely often only becomes combinatorially
interesting when there are sufficiently many patterns to incur natural
structure.

In Section \ref{subsectestcompress}, we test both of our algorithms
for PPA and PPC against algorithms introduced by Inoue, Takashisa, and
Minato \cite{inoue14, inoue14followup}. Their algorithms use a
compression technique to get extremely good performance in certain
cases. We suggest directions of future work for integrating those
techniques into our algorithm for generating $S_n(\Pi)$.

Our implementations and tests are available at \github.

\subsection{Implementations counting $\Pi$-hits in each permutation in $S_n$}\label{subsectestcount}
In this section, we compare our algorithm for counting $\Pi$-hits
in each $n$-permutation to the generate-and-check algorithm.


In addition to implementing our own algorithm, we tried to implement
an efficient generate-and-check implementation for comparison. Suppose, for
simplicity, that $\Pi = \{\pi\}$ with $\pi \in S_k$. Given $\tau \in S_n$,
$\pi \in S_k$, and $j \in \{1, \ldots, k\}$, define $\beta_j(\tau, \pi)$
to be the set of $j$-letter subsequences of $\tau$'s ${n - k + j}$-upfix
that are order-isomorphic $\pi$'s $j$-upfix. Observe that $\beta_k(\tau,
\pi)$ is simply the set of $\Pi$-hits in $\tau$. Our generate-and-check
algorithm constructs $\beta_{j+1}(\tau, \pi)$ out of $\beta_{j}(\tau, \pi)$
by looping through the options for which letter to add as the smallest
letter in the new $(j+1)$-letter subsequence. The algorithm runs in
time
\begin{equation}\label{eqbrutetime}
  \Theta\left(n!  \sum_{i = 1}^k \frac{{{n - k + i} \choose
      i}}{(i-1)!}\right),
\end{equation}
since we spend $\Theta(1)$ time on an $i$-letter subsequence $s$ in
$w$ exactly when $s$ does not use any of the letters $\{1 , \ldots, k
- i\}$ and the largest $i-1$ letters of $s$ are order-isomorphic to
the $(i-1)$-upfix of $\pi$.

Our generate-and-check implementation easily extends to any $\Pi
\subseteq S_k$; in particular, we use a straightforward variant of the technique in
Appendix A to check subsequence membership in $\beta_{j+1}(\tau,
\Pi)$ in $O(1)$ time (using information about previous subsequences),
even when $|\Pi| \not\in O(1)$. The resulting algorithm runs in time
\begin{equation}\label{eqbrutetime2}
  \Theta\left(n!  \sum_{i = 1}^k k_i\frac{{{n - k + i} \choose
      i}}{(i-1)!}\right),
\end{equation}
where $k_i$ is the number of distinct elements of $S_i$ order-isomorphic to the $i$-upfix of some pattern in $\Pi$.

\fig{ \subfig{\tablebrutecntsingle}{Generate-and-check algorithm}
  \subfig{\tableuscntsingle}{Our algorithm}
  \subfig{\tablePiDDcntsingle}{\pidd-based algorithm}
}{figcntsingle}{Time in seconds to find each $\Pi$-hit in each
  permutation in $S_{\le n}$ with $n \in [8,16]$ and for $\Pi$
  containing a single pattern of length $k$ from the set $\{231, 2431,
  24531, 246531\}$.}

\fig{ \subfig{\tablebrutecntmultiple}{Generate-and-check algorithm}
  \subfig{\tableuscntmultiple}{Our algorithm}
  \subfig{\tablePiDDcntmultiple}{\pidd-based algorithm}
}{figcntmultiple}{Time in seconds to count for each $\tau\in S_{\le n}$
  the number of $k$-letter sequences containing a $231$ pattern.}

For single patterns $\pi$, our algorithm runs in $\Theta(n!)$ time,
regardless of $\pi$; in fact, its runtime is almost exactly
proportional to $n!$ (Figure \ref{figcntsingle}). In contrast, the
generate-and-check algorithm suffers as $n$ grows, as suggested in Equation
\eqref{eqbrutetime} (Figure \ref{figcntsingle}). Additionally, our
algorithm scales better to large sets $\Pi \subseteq S_k$, running in
time $O(n!k)$ time (in comparison to Equation \ref{eqbrutetime2}). As
an example, we compute the number of $k$-letter subsequences
containing $231$-pattern in each permutation $S_n$ (Figure
\ref{figcntmultiple}).

\subsection{Implementations computing $|S_n(\Pi)|$}\label{subsectestav}

In this section we compare our pattern-avoidance algorithm to the
naive generate-and-check algorithm and the more sophisticated algorithm of
PermLab.

We implemented our $O(|S_{\le n - 1}(\Pi)| \cdot k)$-time and
$O(n^k)$-space algorithm for counting $|S_1(\Pi)|, \ldots, |S_{n}(\Pi)|$, as well as a naive
generate-and-check algorithm implementation, optimizing both for
performance.

The naive generate-and-check algorithm runs as follows. Let $T_n$ be
the tree of permutations in $S_{\le n}$ such that the children of $\tau
\in S_k$ are each option for $\tau\uparrow^{i}$. Define $C(v)$ to be the
set of children of a node $v$ and $F(v)$ to be the parent. The
generate-and-check algorithm performs a depth-first search on $T_n
\cap S_{\le n}(\Pi)$, visiting a node's children only if the node
itself avoids $\Pi$. In order to determine whether a permutation
$\tau$ avoids $\Pi$, the algorithm 
\iftrue
uses the same logic as in
the previous section.
\fi

The best publicly available code for computing $S_n(\Pi)$, however, is
PermLab, which makes several clever changes to the naive
generate-and-check algorithm in order to hide its asymptotics for
small $n$ \cite{PermLab}. PermLab performs a depth-first search of
$T_n \cap S_{\le n-1}(\Pi)$, computing at a given node $v$ whether
each $c \in C(v)$ avoids $\Pi$. However, since PermLab only visits
nodes avoiding $\Pi$, it only needs to check the children of a node in
$S_j$ for $\Pi$-hits involving $j + 1$. At the same time, when
visiting a node $v \in S_j$, PermLab remembers for each $x \in
C(F(v))$ whether $x$ avoids $\Pi$. Using this information, PermLab can
quickly determine for each $c \in C(v)$ whether $c$ has a $\Pi$-hit not
involving the letter $j$. Thus PermLab needs only search through
brute force for $\Pi$-hits in $c$ involving both $j+1$ and $j$.

We re-implemented Permlab's algorithm, making optimizations specific to
our representation of permutations as 64-bit integers. We also
eliminated some wasted work by carefully examining only permutation
subsequences which could potentially form the upfix of a $\Pi$-hit;
in particular, we filter out subsequences which include a letter too
small to allow for the rest of the $\Pi$-hit to appear after the
upfix.

To search for a $\Pi$-hit in a permutation, PermLab searches
independently for each $\pi \in \Pi$ until it succeeds or concludes
the permutation avoids $\Pi$. However, this scales poorly to handling
large sets of patterns, and allows for performance to be affected by
the order patterns appear in $\Pi$. Instead, just as we did for our
generate-and-check implementation, we use a straightforward variant of
the technique from Appendix A to check whether a subsequence is
order-isomorphic to an upfix of \emph{any} $\Pi$-hit in amortized
constant time (using information about previous subsequences). This
leads to significant speedup when there are many shared upfixes among
the permutations in $\Pi$. On the other hand, if $|\Pi| = 1$, then the
overhead of using the small hash table required for the technique from
Appendix A leads to a slight slowdown. In order to demonstrate the
difference, we implement both variants, calling the
small-set-optimized version (using PermLab's scheme) V1 and the
large-set-optimized version V2.

\fig{ \subfig{\tablebruteavsingle}{Naive generate-and-check algorithm}
  \subfig{\tablehybridavsingle}{V2 algorithm}
  \subfig{\tableVavsingle}{V1 algorithm}
  \subfig{\tablepermlabavsingle}{PermLab}
  \subfig{\tableusavsingle}{Our Algorithm}
  \subfig{\tablePiDDavsingle}{\pidd-based algorithm}
}{figavsingle}{Time in seconds to compute $|S_n(\Pi)|$ with $n \in
  [8,16]$ and for $\Pi$ containing a single pattern of length $k$ from
  the set $\{231, 2431, 24531, 246531\}$.}

In Figure \ref{figavsingle} (the final subfigure of which is discussed
later), we compare the performances of the algorithms handling single
patterns\footnote{We choose not to use identity patterns, since they are
  likely to yield abnormal performance for particular algorithms.},
including V1, V2, and the original PermLab. While V1 performs slightly
faster than V2 in this experiment, it should never perform more than a
constant factor faster. Indeed, to confirm that a permutation is an
avoider, V1 must examine every permutation subsequence which V2 does;
and to discover that a permutation is not an avoider, V1 is expected
to look at at least as many sequences as V2, sometimes re-examining
sequences because patterns share a upfix. Thus the only speedup comes
from not using a small hash table to store pattern
upfixes.

Whereas the naive generate-and-check algorithm's disadvantage grows
with $n$, Permlab's algorithm appears to largely hide its asymptotic
disadvantage for single patterns. 

Shortly, we shall see that the asymptotic difference between PermLab's
algorithm and our algorithm is more pronounced when $\Pi$ is a large
set of patterns. This occurs for two reasons. First, since $S_n(\Pi)$
grows slower, our experiments can access larger $n$, at which point
asymptotics matter more. Second, for a fixed size of patterns, having
more patterns leads to a larger constant behind the cost of detecting
pattern-avoidance for PermLab's algorithm, so that its asymptotic
nature is difficult to hide even for small $n$. Indeed, suppose we
were to compute $S_n(\pi)$ for some $\pi \in S_k$. Then the average
number of subsequences checked by PermLab to detect that a permutation
in $S_n$ avoids $\pi$ might be something like $$\sum_{i = 0}^{k - 2}
\frac{{{n - k + i} \choose {i}}}{(i + 1)!},$$ assuming each
permutation in $S_i$ is equally likely to appear as a given
$i$-subsequence. But (using V2) if $\Pi$ is a large set of patterns
with $k_i$ distinct $i$-upfixes, this becomes
\begin{equation}
  \sum_{i = 0}^{k - 2}{ k_{i+2}\frac{{{n - k + i} \choose
        i}}{(i+1)!}}.
 \label{eqspec}
\end{equation}
If $k_{i+2}$ is within an order of magnitude of $(i+2)!$, then the
$1/(i+2)!$ term no longer hides the asymptotics for small $n$. In
contrast, our algorithm runs in time $O(|S_{n-1}|nk)$ regardless of
$|\Pi|$.

\fig{ \subfig{\tablebruteavmultiple}{Naive generate-and-check algorithm}
  \subfig{\tablehybridavmultiple}{V2 algorithm}
  \subfig{\tableusavmultiple}{Our Algorithm}
}{figavmultiple}{Time in seconds to compute $|S_n(X_k(231))|$.}

\fig{\tablepercentonavoiders}{figavoiderwork}{Fraction of
  permutation-subsequences viewed by V2 that lie in permutations in
  $S_n(X_k(123))$, rather than in non-avoiders.}

In Figure \ref{figavmultiple}, we show algorithm performance for large
sets of patterns. Let $X_k(231)$ be the set of permutations in $S_k$
containing a $123$-hit. Then $S_n(X_k(231))$ contains the permutations
in $S_n$ with no $k$-letter subsequences containing any
$231$-patterns; of course for $n \ge k$ this is simply $S_n(231)$,
which has size the $n$-th Catalan number $C_n$. Unlike our algorithm,
V2 and the naive generate-and-check algorithm do not scale well to
large sets of large patterns. By computing $S_n(X_k(231))$ for a fixed
$k$, we can see how each algorithm performs for varying $n$ and a
fixed large set of patterns in $S_k$. At the same time, by computing
$S_n(X_k(231))$ for a fixed $n$, we can see how each algorithm's
performance changes when we use a larger set of larger patterns to
solve the exact same pattern-avoidance problem. For this experiment,
we use our V2-variant of PermLab, since it is far more suitable for a
large set of patterns. Indeed, while the V1 variant may compute
$S_{12}(123456)$ more than twice as fast as V2, it computes
$S_{12}(X_6(231))$ more than ten times slower (11.7 seconds for V1
versus .97 seconds for V2). In fact, while V1 is ever at most some
constant times faster than V2, V2 can be arbitrarily faster than V1
for large sets of patterns.

Let us take a moment to better understand V2's performance
characteristics in Figure \ref{figavmultiple}. The algorithm's runtime
is essentially proportional to the total number of permutation
subsequences it examines to determine whether permutations are
avoiding, taking between 12 to 13 nanoseconds on average per
subsequence (based on time per subsequence when $k = 6$). 
Although there are several times more non-avoiding permutations tested
than avoiding ones, most of the time is devoted to the the avoiding
ones, and the ratio of work for avoiding versus non-avoiding checks
stays relatively constant (Figure \ref{figavoiderwork}). Thus for a
given $k$, the runtime scales according to the number of avoiders
times the average number of subsequences checked per avoider. The
latter is estimated by Equation \ref{eqspec}.

Note, however, that Equation \eqref{eqspec} assumes that the
$i$-letter subsequences of permutations in $S_n(\Pi)$ are
equi-distributed among $S_i$. A good example where this is not the
case is when computing $S_n(X_3(231))$. Here $12$ is the only valid
pattern 2-upfix, but $12$ and $21$ are \emph{not} equally likely to be
formed by $n-1$ and $n$ in a $231$-avoider. In particular, if $n-1$
precedes $n$, then $n$ must be in the final position. As a result,
Equation \eqref{eqspec} overestimates the average work done per
avoider by a bit less than two thirds. However,
Equation \eqref{eqspec} is more accurate for larger $k$, with percent
error 15.9\%, 4.9\%, and 1.7\% for $k = 4, 5, 6$ respectively and $n =
16$. More importantly, Equation \eqref{eqspec}'s accuracy is
relatively static, with the ratio of the actual work done to the
predicted work done dropping from $n = 10$ to $n = 16$ by $.06$,
$.015$, $.0005$, and $-.0002$ for $k = 3, 4, 5, 6$ respectively. As a
result, the work to generate $S_n(X_k(231))$ from $S_{n-1}(S_k(231))$
is proportional to $$C_n \sum_{i = 1}^{k - 2}{ k_{i+2}\frac{{{n - k +
i} \choose i}}{(i+1)!}}.$$

Equation \eqref{eqspec}'s accuracy was also relatively static for the
single pattern-case tested in Figure \ref{figavsingle}. The largest
drop in the ratio of work done to work anticipated by Equation
\eqref{eqspec} from $n = 8$ to $13$ is .88 dropping to .69 for $k =
5$. As a direction for future work, we suggest studying PermLab's (and
the naive generate-and-check algorithm's) performance further. For example,
is Equation \eqref{eqspec} ever an underestimate for some $\Pi
\subseteq S_k$? For a given $\Pi \in S_k$, is Equation \eqref{eqspec}'s
error bounded by some constant $c$, possibly depending on $\Pi$?


There are downsets of permutations where the difference in
algorithm performances might be much more extreme, even for single
patterns. For an example, one could consider permutations with
inversion number bounded above by some constant, and use a pattern
with few inversions. Indeed, in any case where we are interested in a
downset of permutations, many of which contain numerous
hit-upfixes, the contrast between the algorithmic performances would
be highlighted.

\subsection{In Comparison with \pidd based Algorithms}\label{subsectestcompress}

In 2013, Inoue, Takahisa, and Minato, introduced an algorithm for
generating $S_n(\Pi)$ which, although asymptotically mysterious, runs
very fast in certain cases \cite{inoue14}. Their algorithm represents
sets of permutations in a data structure called a \pidd, which in
practice compresses sets of related permutations well. They then use
set operations, in addition to other select operations easily
performed on a \pidd, in order to construct the \pidd \ for
$S_n(\Pi)$. If the \pidd's compression algorithm works sufficiently
well, the algorithm can potentially run in less than $|S_n(\Pi)|$
time. On the other hand, with poor compression, the algorithm could
perform far worse than the naive generate-and-check algorithm.

\fig{
  \begin{tabular}{l | l l l l}
    Alg $\backslash$ Set-Size & 1 & 2 & 3 & 4 \\ \hline
    Our Alg.                  & 99.765 & 3.731 & 0.197 & 0.049 \\
    \pidd-Based               & 30.648 & 28.377 & 18.328 & 32.820  \\
  \end{tabular}
}{figpiddscaling}{Time in seconds to generate $S_{15}(1234)$, $S_{15}(1234, 2341)$, $S_{15}(1234, 2341, 3412)$, $S_{15}(1234, 2341, 3412, 4123)$ respectively.}

In Figure \ref{figavsingle}, we compare the \pidd-based algorithm with
our algorithm for computing $S_k(\pi)$ for $\pi \in \{231, 2431,
24531\}$ and $n$ varying. While the \pidd-based algorithm runs
extremely fast for $|\Pi| = 1$, it performs far worse for sets of
multiple patterns. In particular, as $|\Pi|$ increases, the time to
compute $S_n(\Pi)$ tends to stay roughly constant as $|\Pi|$ grows,
instead of rapidly shrinking with $|S_n(\Pi)|$ as is the case for our
algorithm.  For an example of this, see Figure \ref{figpiddscaling}. This is
possibly because the \pidd-based algorithm works by generating the
non-avoiders and subtracting those from $S_n$, rather than directly
generating the avoiders.

Observe that Proposition \ref{propfundamental} can be rewritten in
terms of set operations.  Given a permutation $s$, define
$S\uparrow^i_j$ to be the permutation obtained by inserting $(j -
0.5)$ in position $i$ of $s$, and then standardizing the result to a
permutation. For example, $12345678\uparrow^2_5 = \st(1(4.5)2345678) =
152346789$. In turn, given a set of permutations $A$, define
$A\uparrow^i_j$ to be $\{s\uparrow^i_j | s \in A\}$. Then
Proposition \ref{propfundamental} yields the following proposition.
\begin{prop}\label{propavoidre}
Let $\Pi$ be a set of permutations, the largest of which is size
$k$. Then for $n > k$, $$S_n(\Pi) = \cap_{j = 1}^{k + 1} \cup_{i =
  1}^{n}S_{n-1}(\Pi)\uparrow^i_j.$$
\end{prop}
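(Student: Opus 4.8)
The plan is to show that the right-hand side is nothing more than a repackaging of Proposition \ref{propfundamental}, with the witnessing set $X$ taken to be the $k+1$ smallest values of each permutation. The entire argument reduces to correctly translating the composite operation $\uparrow^i_j$ back into the deletion operation $\downarrow$, and then exploiting the fact that Proposition \ref{propfundamental} permits an \emph{arbitrary} choice of $X$.

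First I would pin down the effect of $\uparrow^i_j$. Inserting the value $j - 0.5$ into position $i$ of an $(n-1)$-permutation $s$ and standardizing produces a permutation $\tau \in S_n$ in which the value $j$ sits in position $i$, while every value of $s$ that is at least $j$ is shifted up by one; deleting that value $j$ and standardizing recovers $s$. In the paper's notation this deletion is exactly $\tau \downarrow_{n - j + 1}$, since $\downarrow_m$ removes the value $n - m + 1$. Thus $s \uparrow^i_j = \tau$ forces $s = \tau \downarrow_{n - j + 1}$ together with $i$ equal to the position of the value $j$ in $\tau$.

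Next I would use this to collapse each inner union. Fix $j \in \{1, \ldots, k+1\}$. For a given $\tau \in S_n$, the value $j$ occupies a single well-defined position, so $\tau$ can equal $s \uparrow^i_j$ for at most one pair $(i,s)$, namely $s = \tau \downarrow_{n-j+1}$ with $i$ the position of $j$. Hence $\tau \in \cup_{i=1}^{n} S_{n-1}(\Pi) \uparrow^i_j$ if and only if $\tau \downarrow_{n - j + 1} \in S_{n-1}(\Pi)$. Intersecting over $j = 1, \ldots, k+1$ then shows that membership of $\tau$ in the right-hand side is equivalent to requiring $\tau \downarrow_{n - j + 1} \in S_{n-1}(\Pi)$ for every $j \in \{1, \ldots, k+1\}$ -- that is, that deleting any one of the $k+1$ smallest values of $\tau$ and standardizing lands in $S_{n-1}(\Pi)$.

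Finally I would invoke Proposition \ref{propfundamental} with $X$ chosen to be the set of the $k+1$ smallest values of $\tau$; since $n > k$ we have $\min(k+1,n) = k+1$, so this is a legal choice of $X$. Condition (2) of that proposition is precisely the condition derived in the previous paragraph, and Condition (1), namely $\tau \notin \Pi$, holds vacuously because $|\tau| = n > k$ exceeds the length of every pattern in $\Pi$. Therefore $\tau \in S_n(\Pi)$ if and only if $\tau$ lies in the right-hand side, which is the claim. The one real subtlety -- the step I would check most carefully -- is keeping the index bookkeeping straight between the value-insertion $\uparrow^i_j$ and the value-deletion $\downarrow_m$, in particular verifying that letting $j$ range over $\{1, \ldots, k+1\}$ deletes exactly the smallest, rather than the largest, $k+1$ values.
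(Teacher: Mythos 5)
Your proof is correct and takes essentially the same route as the paper: the paper presents Proposition \ref{propavoidre} as a direct rewriting of Proposition \ref{propfundamental} in set-operation form (offering no further argument), and your proposal simply makes that rewriting explicit. Your index bookkeeping is right --- $s\uparrow^i_j = \tau$ forces $s = \tau\downarrow_{n-j+1}$ with $i$ the position of the value $j$, so intersecting over $j \in \{1,\ldots,k+1\}$ encodes Condition (2) with $X$ the $k+1$ smallest values, while Condition (1) is vacuous since $n > k$.
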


Thus it would be an interesting direction of future work to
efficiently implement the $\uparrow^i_j$ operation for sets
represented using \pidd. Using this, our PPA algorithm could
potentially be re-implemented using \pidd's with runtime in practice
less than $\Theta(|S_n(\Pi)|)$, even for large $\Pi$.

In 2014, Inoue, Takahisa, and Minato extended their algorithm to count
$\Pi$-hits in each $\tau \in S_n$ \cite{inoue14followup}. In
particular, they build the \pidd \ for the set of permutations with $i$
$\Pi$-hits for each $i$. In Figure \ref{figcntsingle}, we compare the
runtime-performance of the \pidd-based algorithm to our own for
single patterns $\pi \in \{231, 2431, 24531\}$, as well as to an
optimized generate-and-check implementation; this time, our algorithm
tends to have the edge. Additionally, unlike our algorithm, which runs
in time $O(n!k)$ regardless of $|\Pi|$, the \pidd-based algorithm
tends to scale approximately linearly in terms of $|\Pi|$. This can be
seen, for example, in Figure \ref{figcntmultiple}, in which our PPC
algorithm, the \pidd-based PPC algorithm, and the generate-and-check
implementation are tested on the pattern set $\{\pi \in S_k \mid
\pi \text{ contains a }231\text{-hit}\}$..

There are many interesting questions still to be asked about the
\pidd-based algorithms. Can they be extended to apply to a
downset of permutations rather than $S_n$? Can theoretical bounds be
proven for their worst-case runtime performances?

\section{Automatically generated conjectures on pattern avoidance}\label{secconjectures}

Past research enumerating $|S_n(\Pi)|$ has tended to focus on small
$|\Pi|$. Do larger sets of patterns also yield interesting number
sequences? In this section, we address this question by mass-computing
$|S_5(\Pi)|, \ldots, |S_{16}(\Pi)|$ for every choice of $\Pi \subseteq
S_4$ satisfying $|\Pi| > 4$,\footnote{On \github, we have also posted
  the same computations for $|\Pi| \in \{1, 2, 3, 4\}$.} and then
searching for the resulting sequences in OEIS \cite{oeis}. Taking out
sequences with cubic or smaller growth, we then filter these to 32,019
OEIS matches, enumerated by 446 OEIS sequences. Finally, we filter out
sets of patterns $\Pi$ for which $S_n(\Pi)$ can be enumerated via the
insertion-encoding technique \cite{insertionencoding,
  insertionencoding2}, and OEIS sequences which we can easily identify
to have been previously associated with permutation pattern
avoidance. We are left with 10 OEIS sequences which conjecturally
enumerate 82 pattern-avoidance problems, none of which can be
explained by insertion encodings. Additionally, there are 22 OEIS
sequences which each enumerate at least one pattern-avoidance problem
which cannot be explained by insertion encodings, as well as
enumerating some that can be. We have released all of our code and
data, including the number sequences $S_5(\Pi), \ldots, S_{16}(\Pi)$
for all $\Pi \subseteq S_4$, at \github.

Because making millions of requests to OEIS is not practical, we
downloaded a local copy of OEIS and built a rudimentary lookup
program. In particular, we define the \emph{OEIS match} of a number
sequence $s$ as the smallest-indexed OEIS sequence which can be
shifted no more than 14 positions to the left in order to obtain $s$.

We computed $|S_5(\Pi)|, \ldots, |S_{16}(\Pi)|$ for each $\Pi
\subseteq S_4$ with $|\Pi| > 4$, starting at $n = 5$ because
$|S_4(\Pi)|$ depends only on $|\Pi|$. There
are 2,137,358 such distinct $\Pi$ up to basic symmetry; in
particular, a given choice of $\Pi$ is equivalent to $\{f(\pi)
| \pi \in \Pi\}$ where $f$ is either the inverse, complement, or
reverse function. In fact, many more $\Pi$ still are likely subtly
equivalent, with only 64,211 distinct sequences appearing among the
2,137,358 sequences computed (i.e., there appear to be 64,211
Wilf-classes).

Next, we checked which sequences have OEIS matches. Surprisingly,
1,412,002 of the 2,137,358 $\Pi$ tested have OEIS matches, distributed
among 1386 distinct OEIS sequences. Aiming to filter out the less
interesting sequences, we went on to throw away matches for sequences
growing at constant rate (826,003 matches attributed to 290 OEIS
sequences), linear rate (391,047 matches attributed to 291 OEIS
sequences), quadratic rate (147,684 matches attributed to 264 OEIS
sequences), or cubic rate (15,249 matches attributed to 95 OEIS
sequences).\footnote{In particular, we considered $|S_5(\Pi)|, \ldots,
  S_{16}(\Pi)$ to be degree $\le k$ if its $k$-th difference was zero
  by $n = 10$.} 



We are left over with 32,019 OEIS matches, attributed to 446 distinct
OEIS sequences. It is natural to wonder how many of these matches are
false alarms. In order to test this, we examined whether running the
same computations for $n$ up to thirteen, fourteen, or fifteen,
instead of sixteen, introduces false OEIS matches which are not
revealed until $n$ gets to sixteen. Surprisingly, we have to go all
the way back to $n = 13$ before any false matches are introduced, at
which point the OEIS sequence A246878 is incorrectly paired with two
pattern-sets\footnote{At first glance, matches with sequence A133641
  also seem to appear at $n = 13$ but then disappear at $n =
  14$. However, this is simply because the OEIS sequence does not have
  enough terms in its entry. We artificially added more to our local
  copy of OEIS.}.


Out of the 446 remaining OEIS sequences, only 24 sequences
(corresponding with 251 pattern sets) are obviously already connected
to pattern avoidance based on their OEIS entries, though this likely
misses some less well-advertised results. After removing these, our
final step is to remove sets of patterns $\Pi$ for which $S_n(\Pi)$
can be completely enumerated using the insertion encoding technique
\cite{insertionencoding, insertionencoding2}. The author would like to
thank Jay Pantone and his PermPy package \cite{permpy} for pointing
these sets of patterns out.

We are left with 32 OEIS sequences conjecturally enumerating a total
289 pattern-avoidance problems, none of which are easily solved by
insertion encodings. Some of these OEIS sequences seem quite
interesting, many with combinatorial interpretations or linear
recurrences, and warrant further individual attention as stand-alone
conjectures.

Ten of the sequences are particularly interesting in that they match
\emph{only} with pattern-avoidance problems which cannot be solved
with insertion encodings, rather than with some problems which can and
some which cannot. We now list these 10 OEIS sequences, along with a
sample $\Pi$ yielding each of them. For each sequence, we provide: (1)
OEIS number and (slightly edited) OEIS brief entry; (2) Number of
pattern-sets matching to sequence; (3) Example matching pattern-set.

\begin{enumerate}

\match{A228180 The number of single edges on the boundary of ordered trees with n edges.
\\ G.f. is $(x\cdot C+2x^3\cdot C^4)/(1-x)$ where $C$ is the generating function for the Catalan numbers.}
      {11}
      {2413 4132 1432 1342 1324}
\match{A035929 Number of Dyck $n$-paths starting $U^mD^m$ (an $m$-pyramid), followed by a pyramid-free Dyck path.}
      {14}
      {2143 3142 1432 1342 1324}
\match{A071721  $\frac{6n}{(n + 1)(n + 2)} \binom{2n}{n}$.}
      {6}
      {2431 4132 1432 1342 1324 1423}
\match{A071717 Expansion of $(1+x^2C)C^2$, where $C$ is the generating function for Catalan numbers.}
      {7}
      {2431 3142 4132 1432 1342 1324 1423}
\match{A071726 Expansion of $(1+x^3C)C$, where $C$ is the generating function for Catalan numbers.}
      {6}
      {2431 2413 3142 4132 1432 1342 1324 1423}
\match{A071742 Expansion of $(1+x^4C)C$, where $C$ is the generating function for Catalan numbers.}
      {3}
      {2431 2143 3142 4132 1432 1342 1324 1423 1243}
\match{A000778 $C(n) + C(n + 1) - 1$, where $C(n)$ is the $n$-th Catalan number.}
      {24}
      {2431 3142 4132 1432 1342 1324}
\match{A109262 A Catalan transform of the Fibonacci numbers.}
      {4}
      {2413 4132 1432 1342 1423}
\match{A119370 G.f. satisfies $A(x) = 1 + xA(x)^2 + x^2(A(x)^2 - A(x))$.}
      {3}
      {2413 3142 1432 1342 1423}
\match{A124671 Row sums of a triangle generated from Eulerian numbers. \\ G.f. equals $x(1-3x+3x^2)/((1-2x)(x-1)^4)$.}
      {4}
      {2341 2134 3412 3124 1342 1324 4123 1243}

\end{enumerate}

For conciseness, we list the other 22 OEIS sequences only by their
names: A095768, A258121, A048487, A132738, A190050, A014833, A101945,
A094864, A097813, A132753, A048495, A052544, A099857, A250778,
A005592, A008776, A128543, A054492, A000918, A032908, A027994, and
A135854. More information on each of these and the corresponding sets of patterns can be found at \github.

We conclude the section with two additional small observations from
our computations. (1) The powers of two appear to enumerate $S_n(\Pi)$
for 69 different $\Pi \subseteq S_{4}$ with $|\Pi \ge 4|$, up to
trivial Wilf-equivalence. Interestingly, exactly one of these sets,
$\Pi = \{4231, 2143, 3412, 3142, 1432, 1324, 1423, 1243\}$, cannot be
handled with the insertion-encoding technique. (2) If we examine sets
$\Pi \subseteq S_4$ of size 4, then one OEIS sequence shows up which
does not appear to be already associated with pattern-avoidance. In
particular, A254316 conjecturally enumerates $S_n(3412, 3142, 4132)$
and $S_n(3142 1342 1243)$.

\section{Conclusion}\label{secconclusion}

In this paper, we provided the first provably fast algorithms for
constructing $S_n(\Pi)$ and for counting $\Pi$-hits in each $\tau \in
S_n$. Surprisingly, even though detecting \emph{whether} a permutation
contains a pattern is NP-hard \cite{bose98}, detecting \emph{which}
permutations contain that pattern is polynomial time per permutation.

Our computationally-generated data and conjectures from Section
\ref{secconjectures} seem particularly ripe for future analysis. It would
also be interesting to run additional large-scale computations to
learn about trends in pattern-avoidance. Our algorithm opens new doors
for computing $S_n(\Pi)$ in cases where $\Pi$ contains a large number
of large patterns. This makes large-scale computations finally
feasible for families of pattern-avoidance relations involving
patterns of size six, seven, and eight. In addition, our $O(n!k)$
algorithm for counting $\Pi$-hits in each $\tau \in S_n$ also brings
previously unobtainable computations within reach.

Our investigation prompts several directions for future algorithmic
work. Can Algorithm \ref{algbuildavoiders}'s runtime be improved to
$\Theta(|S_{\le n-1}(\Pi)|\cdot n)$ using the technique from Theorem
\ref{thmfactorial}? Can Algorithm \ref{algbuildavoiders} be efficiently
implemented to take advantage of \pidd's (Section
\ref{subsectestcompress})? Do the \pidd-based algorithms of Inoue,
Takashisa, and Minato \cite{inoue14, inoue14followup} have good
worst-case or expected runtimes?  What can be said about the accuracy
of Equation \eqref{eqspec}'s error for a given $\Pi$? 

Additionally, it would be interesting to extend our results to
vincular patterns, in which patterns may come with additional
adjacency constraints. In the rest of this section, we will present
our progress on this so far, as well as the challenges involved in
making further progress.

Vincular patterns came into the spotlight in 2000 when Babson and
Steingr\'imsson observed that essentially all Mahonian permutation
statistics can be written as a linear combination of the vincular
patterns appearing in a permutation \cite{babson00}. Just as for
traditional pattern-avoidance, relations to natural structures such as
Dyck paths and set partitions arise in the study of vincular-pattern avoidance \cite{claesson01}.

Vincular patterns come with position-adjacency constraints, meaning
certain pairs of adjacent positions in the pattern are required to
also be adjacent in the hit. In the context of our algorithms, it is
more convenient to discuss covincular patterns, however, which are
equivalent to vincular patterns and come with value-adjacency
constraints. A covincular patterns is a pair $(\pi, X)$ where $\pi \in
S_k$ and $X \subseteq \{0, \ldots, k\}$. An element $x \in X$ from
$1$ to $k-1$ indicates that the letters $x$ and $x + 1$ must be
represented by adjacently-valued letters in any pattern occurrence. If
$0 \in X$ (resp. $k \in X)$, then the smallest (resp. largest) letter
in any pattern-occurrence must also be the smallest (resp. largest)
letter in the entire permutation.

\begin{ex}
  The covincular pattern $(123, \{0, 2\})$ appears $n-2$ times in the
  identity permutation $e_n \in S_n$. In particular, any three letters
  $a_1, a_2, a_3$ forming the pattern must satisfy $a_1 = 1$ and $a_3
  = a_2 + 1$.
\end{ex}

Given a covincular pattern $(\pi, X)$ and a permutation $\tau$, define
$P_i(\tau)$ to be the number of $(\pi, X)$-hits in $\tau$ using the entire
$i$-upfix of $\tau$. The following proposition extends
Proposition \ref{lemPi} to the case where $\Pi$ comprises a single
covincular pattern.

\begin{prop}\label{propvincular}
  Let $\tau \in S_n$. Let $(\pi, X)$ be a covincular pattern. Then
  \[ P_i(\tau) = \left\{
    \begin{array}{ll}
      P_{i+1}                                & \mbox{if $i < n$, $i \le |\pi|$, and $i \in X$,} \\
      P_{i+1}(\tau) + P_i (\tau \downarrow_{i + 1}) & \mbox{if $i < n$, $i \le |\pi|$, $i \not\in X$,} \\
      1                                      & \mbox{if $i=n$ and $\tau\in\Pi$, and} \\
      0                                      & \mbox{otherwise.}
  \end{array}
 \right\} \]
\end{prop}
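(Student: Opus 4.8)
The plan is to mirror the proof of Proposition \ref{lemPi}, isolating the single genuinely new case, namely the first line of the recurrence that governs the value-adjacency constraints. I would begin by recording a concrete description of what is being counted. Since $P_i(\tau)$ counts the $(\pi,X)$-hits of $\tau$ that use the entire $i$-upfix, and the $i$-upfix is the set of values $\{n-i+1,\ldots,n\}$, every hit counted by $P_i(\tau)$ has its $i$ largest letters equal to exactly $n-i+1,\ldots,n$. In particular these top $i$ letters of any hit are \emph{consecutive} integers, so every value-adjacency internal to them is automatically satisfied; the only constraint that can be active at level $i$ is the one relating the $i$-th and $(i+1)$-th largest letters of a hit, which is precisely what $i\in X$ records (together with the global-maximum and global-minimum conditions at the extremes $i=0$ and $i=|\pi|$). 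Fixing this indexing convention — relating the set $X$ of the covincular definition to the level $i$ of the upfix recurrence — is the first thing I would pin down.

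For the last three lines of the recurrence I would argue exactly as in Proposition \ref{lemPi}. The base case $i=n$ with $\tau\in\Pi$ and the ``otherwise'' case ($i>|\pi|$ or $i>n$) are unchanged: no hit can use more than $|\pi|$ letters, and when $i=n$ the $i$-upfix is all of $\tau$. For the additive case ($i<n$, $i\le|\pi|$, $i\notin X$) I would split the hits counted by $P_i(\tau)$ according to whether they also use the $(i+1)$-th largest letter $n-i$: those that do are counted by $P_{i+1}(\tau)$, and those that do not correspond, after deleting $n-i$ and standardizing, to the hits of $\tau\downarrow_{i+1}$ using its own $i$-upfix, i.e.\ to $P_i(\tau\downarrow_{i+1})$. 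The one point beyond Proposition \ref{lemPi} that must be checked is that this correspondence respects $X$: because $i\notin X$, no required value-adjacency spans the deleted letter $n-i$ (the hit's $i$-th and $(i+1)$-th largest letters are permitted to be non-consecutive), and deleting a large letter and standardizing preserves every remaining value-gap as well as any global-extremum condition coming from $0,|\pi|\in X$. Hence the decomposition is a bijection and the additive recurrence holds.

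The heart of the argument is the new first case: $i<n$, $i\le|\pi|$, $i\in X$, where I must show $P_i(\tau)=P_{i+1}(\tau)$. The key observation is that $i\in X$ forces every hit counted by $P_i(\tau)$ to contain the $(i+1)$-th largest letter $n-i$ of $\tau$. Indeed, for $1\le i<|\pi|$ the $i$-th largest letter of such a hit is $n-i+1$, and the value-adjacency recorded by $i\in X$ demands that the next-largest letter be exactly one smaller, i.e.\ equal to $n-i$; for $i=0$ the global-maximum condition demands that the hit contain $n=n-i$. In either situation the hit therefore uses the entire $(i+1)$-upfix and is counted by $P_{i+1}(\tau)$, while conversely any hit using the $(i+1)$-upfix uses the $i$-upfix, giving $P_i(\tau)=P_{i+1}(\tau)$. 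The remaining boundary value $i=|\pi|$ is handled separately: a hit counted by $P_i(\tau)$ would consist of exactly the top $|\pi|$ values, whose smallest is $n-|\pi|+1$, but the global-minimum condition recorded by $i=|\pi|\in X$ requires the smallest letter of the hit to be $1$; since $n>|\pi|$ this is impossible, so $P_i(\tau)=0$, which equals $P_{i+1}(\tau)$ by the ``otherwise'' case.

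I expect the main obstacle to be the bookkeeping of \emph{which} value-adjacency constraint is active at level $i$ and of how it behaves under $\downarrow_{i+1}$: one must commit to the correct indexing of $X$ relative to the upfix (so that $i\in X$ genuinely corresponds to the $i$-th/$(i+1)$-th largest pair, with the extremes $i=0,|\pi|$ reinterpreted as the global extremum conditions), and then verify both that standardization preserves exactly the surviving constraints in the additive case and that the boundary case $i=|\pi|$ collapses to zero. Everything else is a routine transcription of Proposition \ref{lemPi}.
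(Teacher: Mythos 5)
Your proof is correct and follows essentially the same route as the paper's: the paper likewise disposes of the last three cases by appeal to Proposition \ref{lemPi}, and handles $i \in X$ by splitting into $i < |\pi|$ (where the value-adjacency forces any hit using the $i$-upfix to also use the $(i+1)$-upfix) and $i = |\pi|$ (where, since $i < n$, the extremum condition forces $P_i(\tau) = 0$, matching $P_{i+1}(\tau)$ by the ``otherwise'' case). If anything, you are more explicit than the paper, which leaves the top-down reindexing of $X$ against the upfix levels and the preservation of the constraints under $\downarrow_{i+1}$ implicit.
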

\begin{proof}
  Cases (2)--(4) follow just as in the proof of
  Proposition \ref{lemPi}. Suppose $i < n$, $i \le |\pi|$, and $i \in
  X$. If $i = |\pi|$, then since $i < n$ and $i \in X$ we see that
  $P_i(\tau) = 0$, which Case (4) tells us is also the value of
  $P_{i+1}(\tau)$, as desired. On the other hand, if $i < |\pi|$, then
  since $i \in X$, any copy of $\pi$ in $\tau^{-1}$ using the
  $i$-upfix of $\tau$ must also use the $i+1$-upfix of $\tau$. Thus
  $P_i(\tau) = P_{i+1}(\tau)$.
\end{proof}

Using this recurrence, analogues of Theorems \ref{thmdownset} and \ref{thmfactorial} follow with only slightly modified proofs.
\begin{thm}\label{thmvincular1}
  Let $(\pi, X)$ be a covincular pattern. Given a downset $D \subseteq
  S_{\le n}$ and $d^{-1}$ for each $d \in D$, one can count $(\pi,
  X)$-hits in $\tau$ for each $\tau \in D$ in $O(|D|\cdot |\pi|)$
  time.
\end{thm}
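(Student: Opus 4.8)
The plan is to mirror the proof of Theorem \ref{thmdownset} almost verbatim, substituting the covincular recurrence of Proposition \ref{propvincular} for the ordinary recurrence of Proposition \ref{lemPi}. First I would modify Algorithm \ref{algcounthits} so that, when computing the $P_i(\tau)$ in order of decreasing $i$, the step for a given $i$ consults Proposition \ref{propvincular}: if $i \in X$ (with $i < n$ and $i \le |\pi|$) it sets $P_i(\tau) := P_{i+1}(\tau)$, and otherwise it sets $P_i(\tau) := P_{i+1}(\tau) + P_i(\tau \downarrow_{i+1})$, with the two base cases ($i = n$ and $\tau \in \Pi$, and the zero case) handled exactly as before. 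Since $P_i(\tau) = 0$ whenever $i > |\pi|$, only the indices $i \in \{0, 1, \ldots, |\pi|\}$ ever need to be touched, so each permutation incurs $O(|\pi|)$ work in this modified routine.

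Next I would organize the global computation exactly as in Theorem \ref{thmdownset}. Using bucket-sort I would split $D$ into the slices $D \cap S_i$ for $1 \le i \le n$ in $O(|D|)$ total time, and then process the permutations in increasing order of size. Because $D$ is a downset, each $\tau \downarrow_{i+1}$ lies in a smaller slice of $D$ and will already have had its $P$-values computed by the time $\tau$ is visited; thus whenever the recurrence calls for $P_i(\tau \downarrow_{i+1})$ the value is available by a table lookup. As in Theorem \ref{thmdownset}, I would invoke the assumption that each $\tau \downarrow_{i+1}$ can be produced in constant time, which is precisely why the inverse $d^{-1}$ of each $d \in D$ is supplied: the block-manipulation technique of Theorem \ref{thmintermediateavoid} converts between consecutive downfixes $\tau \downarrow_i$ and $\tau \downarrow_{i+1}$ in $O(1)$ time each given $\tau^{-1}$. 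Assembling the pieces gives $O(|D|)$ for the bucket-sort plus $O(|\pi|)$ per permutation, hence $O(|D| \cdot |\pi|)$ overall, as claimed.

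The one point I would take care to check — and I expect it to be the only content beyond bookkeeping — is that the dependency structure of the covincular recurrence remains compatible with this single increasing-size pass. Proposition \ref{propvincular} still expresses $P_i(\tau)$ solely in terms of $P_{i+1}(\tau)$ (same permutation, larger index) and $P_i(\tau \downarrow_{i+1})$ (a strictly smaller permutation). Since within a fixed $\tau$ we descend through $i$ from $|\pi|$ down to $0$, while across permutations we ascend in size, every quantity on the right-hand side is already available when it is needed, exactly as in the non-covincular case. The presence of $X$ only ever \emph{removes} a downfix lookup (whenever $i \in X$), so it can neither disturb this ordering nor increase the per-permutation cost; this is why the author's remark that the analogue "follows with only slightly modified proofs" is justified.
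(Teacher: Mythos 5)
Your proposal is correct and takes essentially the same approach as the paper, whose proof is precisely this one-liner: modify Algorithm~\ref{algcounthits} to use the recurrence of Proposition~\ref{propvincular} in place of that of Proposition~\ref{lemPi}, and then repeat the argument of Theorem~\ref{thmdownset} (bucket-sort into slices $D \cap S_i$, an increasing-size pass, and constant-time downfix computation via the supplied inverses as in Theorem~\ref{thmintermediateavoid}). Your explicit check of the dependency ordering and of the $O(|\pi|)$ per-permutation cost just spells out details the paper leaves implicit.
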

\begin{proof}
  If one modifies Algorithm \ref{algcounthits} to use the recurrence
  from Proposition \ref{propvincular} on $\tau$ rather than the
  recurrence from Proposition \ref{lemPi} on $\tau$, then the proof
  follows just as for Theorem \ref{thmdownset}.

\end{proof}

\begin{thm}\label{thmvincular2}
  Let $(\pi, X)$ be a covincular pattern. Then the number of $(\pi, X)$-hits
  in each $\tau \in S_n$ can be computed in $\Theta(n!)$ time,
  regardless of $|\pi|$.
\end{thm}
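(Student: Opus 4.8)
The plan is to replay the proof of Theorem~\ref{thmfactorial} essentially verbatim, substituting the covincular recurrence of Proposition~\ref{propvincular} for the plain recurrence of Proposition~\ref{lemPi}. The engine of Theorem~\ref{thmfactorial} is the observation that if the $i$-upfix of $\tau$ fails to be order-isomorphic to the $i$-upfix of $\pi$, then $P_i(\tau) = 0$, so the adapted version of Algorithm~\ref{algcounthits} need never compute $P_j(\tau)$ for $j$ at or beyond the first such $i$. First I would check that this zero-claim survives the addition of value-adjacency constraints: any $(\pi, X)$-hit using the entire $i$-upfix of $\tau$ must in particular have its $i$ largest letters order-isomorphic to the $i$-upfix of $\pi$, since the constraints recorded by $X$ are only further restrictions layered on top of order-isomorphism. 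Hence a mismatch forces $P_i(\tau) = 0$ exactly as in the uncolored case.

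Next I would confirm that the pruning is self-consistent, i.e.\ that the values $P_j(\tau)$ we decline to compute are never requested later. In Proposition~\ref{propvincular} the only smaller permutation appearing is $\tau\downarrow_{i+1}$ (in the $i \notin X$ branch), and since $i+1$ exceeds the mismatch index $i$, deleting the $(i+1)$-st largest value leaves the order-isomorphism type of the $i$-upfix unchanged. Thus every $\tau'$ that would call for a pruned value $P_j(\tau)$ itself has a mismatched $i$-upfix and is pruned at the same threshold; the $i \in X$ branch merely copies $P_{i+1}(\tau)$ and requests nothing smaller, so it cannot reintroduce a pruned request. The constant-time incremental upfix test carries over unchanged: given that the $(i-1)$-upfix already matches, I would use $\tau^{-1}$ to locate the position of the $i$-th largest letter and check in $O(1)$ whether splicing it in keeps the $i$-upfix order-isomorphic to $\pi$'s.

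Finally I would rerun the same expectation bound. Letting $T_r$ be the indicator that the $r$-upfix of a permutation is order-isomorphic to $\pi$'s $r$-upfix, the modified algorithm spends $O(1) + \sum_r T_r(\tau)$ time per $\tau$; since the $r$ largest values of a uniform permutation appear in each of the $r!$ relative orders equally often, $\E(T_r(\tau)) \le 1/r!$. Summing $\sum_r 1/r! \le e$ and multiplying by the $n!$ permutations of $S_n$ yields the $\Theta(n!)$ bound, independent of $|\pi|$. I expect the only genuine subtlety to be the first step, namely confirming that the value-adjacency constraints of $X$ cannot manufacture a hit whose $i$-upfix is order-isomorphic to $\pi$'s without the base order-isomorphism already holding; but this is immediate once one notes that $X$ only imposes extra equalities among consecutively-valued letters (together with boundary conditions) and never relaxes the underlying pattern.
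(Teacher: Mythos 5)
Your proposal is correct and follows exactly the paper's own route: the paper proves Theorem~\ref{thmvincular2} by invoking the technique of Theorem~\ref{thmfactorial}, pruning the recursion of Proposition~\ref{propvincular} whenever the $i$-upfix of $\tau$ fails to be order-isomorphic to the $i$-upfix of $\pi$ (which forces $P_i(\tau)=0$), and the expectation bound $\E(T_r(\tau))\le 1/r!$ then gives $\Theta(n!)$ total time. Your additional checks (that $X$ only adds constraints so the zero-claim still holds, that pruned values are never requested, and that the $i\in X$ branch requests nothing new) are exactly the details the paper leaves implicit.
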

\begin{proof}
  The result follows using the same technique as in the proof of
  Theorem \ref{thmfactorial}. In particular, when applying the
  recursion from Proposition \ref{propvincular} to compute $P_i(\tau)$
  for some $\tau \in S_n$, one checks whether the $i$-upfix of $\tau$
  is order-isomorphic to the $i$-upfix of $\pi$. If the two are not
  order-isomorphic, $P_i(\tau)$ must be zero.
\end{proof}

Theorem \ref{thmvincular2} shows that we can count $(\pi, X)$-hits for each
covincular permutation in $S_{\le n}$ in $\Theta(n!)$ time. By
considering each pattern in $\Pi$ separately, this extends to an
algorithm for counting $\Pi$-hits for any set $\Pi$ of covincular
permutations in $O(n!|\Pi|)$ time.

It is still an open problem, however, to quickly build $S_{\le n}(\Pi)$ if
$\Pi$ comprises covincular patterns. The difficulty in this comes from
the fact that $S_{\le n}(\Pi)$ needs not be a downset in this
case. Indeed, removing a letter from an avoider $\tau$ may introduce a
covincular pattern which was not previously present. For example, the
permutation $1342$ does not contain a covincular $(123, \{1\})$ pattern, but removing
$2$ results in a permutation which does.

One special case of a covincular pattern is when $X = \{1, \dots, k -
1\}$, meaning that every pair of adjacently valued letters in the
pattern must also be adjacently valued in any occurrence of the
pattern. This is what's known as a \emph{consecutive pattern}. For
consecutive patterns $\pi$, Theorem \ref{thmvincular1} counts
$\pi$-hits in a downset $D$ in $O(|D| \cdot |\pi|)$ time (assuming $d^{-1}$ is
known for each $d \in D$). Interestingly, in this case, the PPM
problem (detecting a $\pi$-pattern in a single permutation $\tau
\in S_n$) already has a linear time solution due to Kubica,
Kulczy{\'n}ski, Radoszewski, Rytter, and Wale{\'n}  \cite{kubica13}. A
similar result was found independently by Kim et. al.  \cite{kim14}.

\section{Acknowledgments}

This research was conducted at the University of Minnesota Duluth REU
and was supported by NSF grant 1358659 and NSA grant
H98230-13-1-0273. Machine time was provided by an AWS in Education
Research grant. The author thanks Joe Gallian for suggesting the
problem; Samuel D. Judge and David Moulton for offering advice on
directions of research; Yuma Inoue for discussing and providing
implementations of \pidd-based algorithms; and Michael Albert for his
PermLab implementation.

\section{Appendix A}

Given a set of pattern $\Pi$, a permutation $\tau \in S_n$, and the
inverse $\tau^{-1}$, a common computation is to compute for which $i$
the $i$-upfix of $\tau$ is order-isomorphic to the $i$-upfix of any
permutation $\pi \in \Pi$. It turns out that one can run this check
for all $i$ in the range $1, \ldots, r$ in time $O(r)$.

To do this, for each $i \in [r]$, we compute the standardization of
the $i$-upfix of $\tau$, and then check its membership in a hash
table\footnote{Note that there is a small preprocessing cost which
must be paid at the beginning of the algorithm to build these hash
tables.} containing the standardized $i$-upfixes of each
$\pi \in \Pi$. In fact, it turns out we can compute the
standardizations of each of the successive $i$-upfixes in constant
time. This takes advantage of the $\popcount$ instruction, which on
most modern machines obtains the number of 1s in an integer's binary
representation through a single instruction. In particular, we
maintain a bitmap $b$ (in the form of an integer) where $b[j] = 1$ if
some $k \in [n-i+1, \ldots, n]$ is in position $j$. We can then use
$\popcount$ to query how many letters in $\tau$'s $i$-upfix appear to
the right of $n-i+1$; this tells us in what position to insert $1$
into the standardized $(i-1)$-upfix in order to obtain the
standardized $i$-upfix. The insertion can then be performed using bit
hacks in constant time.

\newpage
\bibliographystyle{plain} \pagestyle{empty}\singlespace
\bibliography{paper}

\end{document}